\title{Wedge-Local Quantum Fields on a Nonconstant Noncommutative Spacetime}
\author{A. Much\\ \footnotesize{Max-Planck-Institute for Mathematics in 
the Sciences, 04103 Leipzig, Germany}
\\ \footnotesize{Institute for Theoretical Physics, University of 
Leipzig, 04009 Leipzig, Germany}}
\newtheorem{theorem}{Theorem}[section]
\newtheorem{lemma}[theorem]{Lemma}
\newtheorem{proposition}[theorem]{Proposition}
\newtheorem{definition}[theorem]{Definition}
\begin{document}

\maketitle
\abstract{Within the framework of warped convolutions we deform the 
massless free scalar field. The deformation is performed by using the 
generators of the special conformal transformations. The investigation 
shows that the deformed field turns out to be wedge-local. Furthermore, 
it is shown that the spacetime induced by the deformation with the 
special conformal operators is nonconstant noncommutative. The 
noncommutativity is obtained by calculating the deformed commutator of 
the coordinates.
  }\newpage
\tableofcontents
\section{Introduction}
Noncommutative quantum field theories (NCQFT) enjoy wide popularity 
among theoretical physicists. From a string theoretical point of view, 
NCQFT became popular due to the observation that it can be obtained in a 
certain low energy limit from string theory \cite{SWS}. From a quantum 
field theoretical aspect, NCQFT gained interest due to many reasons. 
Most importantly it was thought that by the introduction of a 
fundamental length renormalisation ambiguities would disappear and 
ultra-violet (UV) divergences would be canceled. But already in first 
order of perturbation theory, the euclidean noncommutative $\phi^4$ 
model exhibited a new type of divergences. The new divergences could not 
be cured by standard renormalisation procedures. In a series of papers 
\cite{GW1, GW2, GW3} the authors added a term to the noncommutative  
$\phi^4$ model, based on duality considerations, and proved the 
renormalisability to all orders in perturbation.
\newline\newline
Quantum field theory on a noncommutative Minkowski spacetime was 
rigorously realised in \cite{DFR}. The quantum field therein was 
defined on a tensor product space $\mathcal{V}\otimes\mathscr{H}$. Where $\mathscr{H}$ is the Bosonic Fock space and $\mathcal{V}$ is the representation space of the noncommuting coordinate operators $\hat{x}_{\nu}$, satisfying the Moyal-Weyl plane commutator relations $[\hat{x}_{\mu},\hat{x}_{\nu}]=i\theta_{\mu\nu}$. The matrix $ \theta_{\mu\nu}$ is a constant, nondegenerate and skew-symmetric matrix.
Many authors \cite{{AB}, {G}, {GL1}} succeeded representing the scalar field on $\mathscr{H}$ instead of 
$\mathcal{V}\otimes\mathscr{H}$. Furthermore, in \cite{GL1} this representation was used to construct a map  from the set of skew-symmetric matrices, which describe the noncommutativity, 
to a set of wedges. In the next step the construction was applied to map
the noncommutative scalar field to a scalar field living on a wedge. The respective model led to 
weakened locality and covariance properties of the field and to a 
nontrivial S-matrix. The result is astonishing because notions of 
covariance and locality are usually lost on a noncommutative spacetime.
\newline\newline
The method of deformation was further generalised in 
\cite{{BS},{BLS},{GL4}} and was made public
under the name of warped convolutions. It is interesting to note that 
the model formulated in
\cite{GL1} can be obtained from warped convolutions by using the 
momentum operator $P_{\mu}$ for the
deformation.
The method in \cite{GL4} was also successfully used to define deformations of 
a scalar massive Fermion,
\cite{A}.
\newline\newline
In fact any strongly continuous unitary representation of the group 
$\mathbb{R}^n$
can be used to deform the free scalar field. In the following work we 
deform the QFT of a free
scalar field with the special conformal operators using the method of 
warped convolutions. We show
that the resulting noncommutative spacetime is nonconstant. Furthermore, 
we show that the
constructed model exhibits covariance and locality properties which are 
highly nontrivial for a
nonconstant noncommutative spacetime. 
\newline\newline
The organisation of the paper is as follows: 
In Sec. 2, we give a brief introduction of the conformal group and the 
isomorphism to the pseudo-orthogonal group $SO(2,d)$. The proof of  
self-adjointness of the special conformal operators was given rigorously 
in \cite{SV} and is sketched in Sec. 3. The proof therein relies on 
the fact that the momentum operator and the special conformal operator 
are unitarily equivalent. Furthermore, we deform the free scalar field 
with the special conformal operators and use the unitary equivalence to 
proof convergence of the deformation in the Hilbert space norm. The 
Wightman properties, transformation properties and wedge-locality of the 
deformed field are proven in Sec. 4.  In Sec. 5, we show how the 
deformation with the special conformal operators leads to a nonconstant 
noncommutative spacetime. This is done by calculating the commutator of 
the coordinates using the deformed product given in \cite{BLS}.
\section{The conformal group and SO(2,d)}
\subsection{Generators of the conformal group}
A conformal transformation  of the coordinates is defined to be an 
invertible mapping $\mathbf{x}'\rightarrow \mathbf{x}$, which leaves the 
$d$-dimensional metric $g $  invariant up to a scale factor, \cite{DMS}:
\begin{equation}  \label{cond1}
g'_{\mu\nu}(x')=F(x)g_{\mu\nu}(x).
\end{equation}
The mappings satisfying the condition (\ref{cond1}) are the Lorentz 
transformations, the translations, the dilations and the special 
conformal transformations. These transformations are generated by the 
operators $L_{\mu\nu}$, $P_{\rho}$, $D$, $K_{\sigma}$ and the set of all 
conformal transformations forms the conformal group.
\\\\
  The conformal algebra is defined by the commutation relations of the 
generators and is given as follows:
\begin{equation}
[L_{\mu\nu},L_{\rho\sigma}]
=i\left(\eta_{\mu\sigma}L_{\nu\rho}+\eta_{\nu\rho}L_{\mu\sigma}-\eta_{\mu\rho}L_{\nu\sigma}-
\eta_{\nu\sigma}L_{\mu\rho},
\right)
\end{equation}
\begin{equation}
[P_{\rho},L_{\mu\nu}]=i\left(\eta_{\rho\mu}P_{\nu}-\eta_{\rho\nu}P_{\mu}\right),
\qquad 
[K_{\rho},L_{\mu\nu}]=i\left(\eta_{\rho\mu}K_{\nu}-\eta_{\rho\nu}K_{\mu}\right),
\end{equation}
\begin{equation}
[P_{\rho},D]=iP_{\rho}, \qquad [K_{\rho},D]=-iK_{\rho},
\end{equation}
\begin{equation}
[P_{\rho},K_{\mu}]= 2i\left(\eta_{\rho\mu}D-L_{\rho\mu}\right),
\end{equation}
   with all other commutators being equal to $0$.

\subsection{Isomorphism between the conformal group and $SO(2,d)$}
To see the isomorphism between the conformal group in d dimensions and 
the pseudo-orthogonal group  $SO(2,d)$,  one introduces the following 
definitions:

\begin{equation}
J_{4,\mu}:=\frac{1}{2}\left(P_{\mu}-K_{\mu}\right),\qquad 
J_{5,\mu}:=\frac{1}{2}\left(P_{\mu}+K_{\mu}\right),
\end{equation}
\begin{equation}
J_{\mu}^{\pm}:=J_{5,\mu}\pm J_{4,\mu}
  \qquad J_{-1,0}:=D,\qquad J_{\mu\nu}:=L_{\mu\nu},
\end{equation}
\begin{equation}
J_{ab}=-J_{ba},\qquad a,b=0,1,\dots,d,d+1.
\end{equation}
The defined generators $J_{ab}$ obey the algebra of $SO(2,d)$ with the 
following commutator relations:
\begin{equation}
[J_{ab},J_{cd}]
=i\left(\eta_{ad}J_{bc}+\eta_{bc}J_{ad}-\eta_{ac}J_{bd}-
\eta_{bd}J_{ac}
\right),
\end{equation}
where the diagonal metric has the following form
\begin{equation}
\eta_{aa}=(+1,\underbrace{-1,..,-1}_{d},+1).
\end{equation}
This shows the isomorphism between the conformal group and $SO(2,d)$. As 
one can easily see, the full conformal group contains the Poincar\'{e} 
group as a subgroup.
\section{Deforming the scalar quantum field}
The method used for deformation in this work was introduced in 
\cite{BS,BLS} and goes by the name of warped convolutions. This device 
can be used to deform relativistic quantum field theories and the hope 
is in near future to solve a nontrivial interacting QF model in 4 
dimensions. There were interesting results in \cite{GL2} where the authors obtained a non trivial S-matrix differing from the free one by momentum dependent phases.  The results introduced in \cite{GL1, GL2}  
can be obtained by using the framework of warped convolutions to deform 
the underlying QFT. This is done by using the momentum operator as the 
generator of a strongly continuous unitary representation of the group 
$\mathbb{R}^n$. \newline\newline
Our approach in this work will be to use the special conformal operators 
as the generators of such representations. This will lead to a new QFT 
model which on one hand can be interpreted as a QF on a nonconstant 
noncommutative spacetime, and on the other hand as a wedge-local QFT model.

\subsection{Self-adjointness of the special conformal operators}
To proceed with deforming via warped convolutions, it is necessary to 
prove self-adjointness of the special conformal operator $K_{\mu}$.  The 
proof was given in \cite{SV} relying on the fact that the special 
conformal operator can be defined as
\begin{equation}\label{puk}
K_{\mu}:= U_{R}P_{\mu} U_{R},
\end{equation}
where $ U_{R}$ is the inversion operator. The reason for the definition 
is that any special conformal transformation
\begin{equation}
U(b)x^{\mu} =\frac{x_{\mu}-b_{\mu}x^2}{1-2b_{\mu}x^{\mu}+b^2x^2},
\end{equation}
can be written as a product
\begin{equation}
U(b)=U_{R}T(b)U_{R},
\end{equation}
of a translation $T(b)x^{\mu}=x^{\mu}+b^{\mu}$ and inversions 
$U_{R}x^{\mu}=-{x^{\mu}}/{x^2}$.
By constructing  a self-adjoint unitary representation $ U_{R}$ in 
$\mathscr{H}_{1}:=L^2(d^n\mu(\mathbf{p}),\mathbb{R}^n)=\{f : \int 
d^n\mu(\mathbf{p})  |f(\mathbf{p})|^2<\infty, d^n\mu(\mathbf{p}):= 
d^n\mathbf{p} (2|\mathbf{p}| )^{-1}
\}$ for $n\geq 1$,  the essential self-adjointness of the operators $K^{\mu}$  on the 
dense domain $\Delta (R):= U_{R}\Delta (P)$ follows. $\Delta (P)$ is the 
dense domain of all functions from $ \mathscr{H}_{1} $ vanishing at 
infinity faster than any inverse polynomial in $p^k$ and is given as follows
\begin{equation}
\Delta  (P) =\{f\in \mathscr{H}_{1} :
|\left(\mathbf{p}^2
\right)^rf(\mathbf{p})|\leq c_r(f)<\infty;\quad r=0,1,2,\dots\}.
\end{equation}
Due to the unitary equivalence (\ref{puk}),  $P_{\mu}$ and $K_{\mu}$ 
have the same spectrum contained in the closed forward cone
  \begin{equation}
  \overline{V^{+}_0}:=\{p^{\mu}: p^{\mu}p_{\mu}\geq 0, p_{0}\geq 0\}.
\end{equation}
The last step in \cite{SV} consists in showing that the special conformal 
operator defined in the following way (\ref{puk}), is identical with the 
special conformal generator of the conformal group.

\subsection{Special conformal transformation of the free scalar field}
Since in the context of the present paper we need the transformation of 
the free scalar field under the special conformal group, we shall also 
briefly summarise those results obtained in \cite{SV}.\newline\newline
For $n=1$ the existence of a unitary representation for the whole 
conformal group was proven. The special conformal operator transforms 
the free scalar field $\phi(x)$ in the following manner
\begin{equation}
\alpha_{b}(\phi(x)):=
e^{ib_{\mu}K^{\mu}} \phi(x)e^{-ib_{\mu}K^{\mu}} =
\phi(x_b)-\phi(-\frac{b}{b^{\mu}b_{\mu}}),
\end{equation}
where
\begin{equation}
  x^{\mu}_b:=\frac{x_{\mu}-b_{\mu}x^2}{1-2b_{\mu}x^{\mu}+b^2x^2}.
\end{equation}
In the two dimensional spacetime test functions $f\in 
\mathscr{S}(\mathbb{R}^2)$ which are used to
smear the distribution valued operator $\phi(x)$ are chosen to satisfy 
$\int d^2x f(x)=0$. The
reason for this specific choice is to circumvent IR-divergences and it 
will be used through the
entire work.\\\\
Now if $n=2l+1$ for $l\in\mathbb{N}$ one obtains the following result
\begin{equation}\label{trafo1}
\alpha_{b}(\phi(x))=\sigma_b(x)^{\frac{1-n}{2}}\phi(x_b),
\end{equation}
where
\begin{equation}\label{sf}
\sigma_b(x):= {1-2b_{\mu}x^{\mu}+b^2x^2}.
\end{equation}
It was further proven that one only obtains a unitary representation for 
the whole conformal group if  $n=4l+1$ for $l\in\mathbb{N}$. In the 
other cases for odd $n$ one has to deal with representations of the 
covering of the conformal group. The reason for the non-existence of a 
unitary representation for the whole conformal group lies in the non 
positivity of the scale factor $\sigma_b(x)$. For the present paper this 
will become important due to our intention to formulate the model in four
spacetime dimensions. In Sec. 4, we will prove that the scale factor 
$\sigma_b(x)$ is positive for a scalar field localised in the wedge. 
Therefore, we will not have a problem obtaining a unitary representation 
for the whole conformal group.
 
\subsection{Deforming the QF with special conformal operators}
In this section we deform the massless scalar field with the special 
conformal operators using the framework warped convolutions. To proceed 
with the deformation, we first define the undeformed free scalar field 
$\phi$ with mass $m=0$ on the $n+1$-dimensional Minkowski spacetime as 
an operator valued distribution acting on its domain in the Bosonic Fock 
space. Such a particle with momentum $\mathbf{p} \in \mathbb{R}^n$ has 
the energy defined by $\omega_{\mathbf{p} }=|\mathbf{p}|$.
\begin{definition} The Bosonic Fock space $\mathscr{H}^{+}$ is defined 
as in \cite{Fr,S}:
\begin{equation*}
\mathscr{H}^{+}=\bigoplus_{m=0}^{\infty}\mathscr{H}_{m}^{+}
\end{equation*}
where the m particle subspaces are given as
  \begin{align*}
\mathscr{H}_{m}^{+}&=\{\Psi_{m}: \partial V_{+} \times  \dots \times 
\partial V_{+} \rightarrow \mathbb{C}\quad \mathrm{symmetric}
|\\ &\left\Vert  \Psi_m \right\Vert^2 =\int 
d^n\mu(\mathbf{p_1})\dots\int d^n\mu(\mathbf{p_m})
|\Psi_{m}(\mathbf{p_1},\dots,\mathbf{p_m})|^2<\infty\},
\end{align*}
with
  \begin{equation*}
\partial V_{+}:=\{p\in \mathbb{R}^d|p^2=0,p_0>0\}.
\end{equation*}
  \end{definition}
The particle annihilation and creation operators can be defined by their 
action on m-particle wave functions
  \begin{align*}
(a(f)\Psi)_m(\mathbf{p_1},\dots,\mathbf{p_m})&=\sqrt{m+1}\int 
d^n\mu(\mathbf{p})\overline{f(\mathbf{p})}
\Psi_{m+1}(\mathbf{p},\mathbf{p_1},\dots,\mathbf{p_m})\\
(a(f)^{*}\Psi)_m( \mathbf{p },\mathbf{p_1},\dots,\mathbf{p_m})&= \left\{
\begin{array} {cc}
0, \qquad &m=0 \\ \frac{1}{\sqrt{m}}\sum\limits_{k=1}^{m} f(\mathbf{p_k})
\Psi_{m-1}(\mathbf{p_1},\dots,\mathbf{p_{k-1}},\mathbf{p_{k+1}},\dots,\mathbf{p_m}),\quad 
&m>0
\end{array} \right.
\end{align*}
with $f \in \mathscr{H}_{1} $ and $\Psi_m \in \mathscr{H}_{m}^{+}$ . The commutator relations of $a(f), 
a(f)^{*}$ follow immediately and are given as follows
  \begin{align*}
[a(f), a(g)^{*}]=(f,g)=\int d^n\mu(\mathbf{p}) \overline{f(\mathbf{p})} 
g(\mathbf{p}), \qquad
[a(f), a(g)]=0=[a^{*}(f), a^{*}(g)].
\end{align*}
Particle annihilation and creation operators with sharp momentum are 
introduced as operator valued distributions and are given as follows
  \begin{align*}
a(f)=\int d^n\mu(\mathbf{p}) \overline{f(\mathbf{p})}a(\mathbf{p}), 
\qquad a(f)^{*}=\int d^n\mu(\mathbf{p}) {f(\mathbf{p})}a^{*}(\mathbf{p}),
\end{align*}
where the particle annihilation and creation operators with sharp 
momentum satisfy the following commutator relations
  \begin{align*}
[a(\mathbf{p}), a(\mathbf{q})^{*}]=2\omega_{\mathbf{p} 
}\delta^n(\mathbf{p}-\mathbf{q}), \qquad
[a(\mathbf{p}), a(\mathbf{q})]=0=[a^{*}(\mathbf{p}), a^{*}(\mathbf{q})].
\end{align*}
In the next step we define the warped convolutions of the free scalar 
field. This is done using the essential self-adjointness of the 
generators $K_{\mu}$ which in turn define a unitary operator $U(b):= 
e^{ib_{\mu}K^{\mu}}$.  The definition of the operator valued function 
$U(b)$ leads to a strongly continuous unitary representation of 
$\mathbb{R}^d$, for each $b_{\mu}\in \mathbb{R}^{d}$. This can be proven 
by making use of Stone's theorem, \cite{RS}.  To define the deformation, 
we need the the unitary operator of translations defined by $T(y):= 
e^{iy_{\mu}P^{\mu}}$, for each $y_{\mu}\in \mathbb{R}^{d}$, and the
extended dense domain  $\Delta_m(P):=\bigotimes_{k=1}^{m} 
\Delta(P) $. Furthermore, we define
$\Gamma( U_{R}):=\bigotimes_{k=1}^{m} U_{R} $ to be the unitary operator 
of the inversions on $ \mathscr{H}^{+}_{m}$, \cite{RS}. From the former 
definitions the extended domain $\Delta_m (R) =\Gamma( U_{R})\Delta_m 
(P) $ follows.
\begin{definition}  \label{deff}
  Let $\theta$ be a real skew-symmetric matrix w.r.t. the Minkowski 
scalar-product on $\mathbb{R}^{d}$ and let $\phi(f)$ be the free scalar 
field smeared out with functions $f \in \mathscr{S}(\mathbb{R}^d)$. Then 
the operator valued distribution $\phi(f)$ deformed with the special conformal operators, 
denoted as $\phi_{\theta,K}(f)$,  is defined on vectors of the dense domain 
$\Delta_m (R)$  as follows
\begin{align}\label{defe}
\phi_{\theta,K}(f)\Psi_{m}:&=(2\pi)^{-d}
  \iint  d^{d}y d^{d}k e^{-iy_{\mu}k^{\mu}}\alpha_{\theta 
y}(\phi(f))U(k)\Psi_{m}\\&=
(2\pi)^{-d}
  \iint  d^{d}y d^{d}k e^{-iy_{\mu}k^{\mu}}\alpha_{\theta 
y}\left(a(\overline{f^-})+a^*(f^+)\right)U(k)\Psi_{m}\\&
=:\left(a_{\theta,K}(\overline{f^-})+a_{\theta,K}^*(f^+)\right)\Psi_{m}.
\end{align}
The test functions $f^{\pm}(\mathbf{p})$ in momentum space are defined 
as follows
\begin{equation*}
f^{\pm}(\mathbf{p}):=\int d^{d}xf(x)e^{\pm ipx}, \qquad 
p=(\omega_{\mathbf{p}},\mathbf{p})\in \partial V_{+}.
\end{equation*}
\end{definition}
By arguing with the essential self-adjointness of the special conformal operator, it can be shown that the integral (\ref{defe}) converges in the strong operator topology if the undeformed operator is bounded, \cite{BLS}. Due 
to the fact that we are dealing with an unbounded operator it is not 
clear in what sense the deformation with the special conformal operator 
converges. To show that the warped convolutions (\ref{defe}) exist in Hilbert space norm, we 
use the unitary equivalence (\ref{puk}) between the momentum operator 
and the special conformal operator.

\begin{lemma} \label{ldk1}
  For $f \in \mathscr{S}(\mathbb{R}^d)$ and $\Psi_m \in \Delta_m(R)$, a 
transformation exists that maps the field deformed with the momentum 
operator $\phi_{\theta,P}(f)$ to the field  deformed with the special 
conformal operator $\phi_{\theta,K}(f)$. This transformation is given as 
follows
\begin{equation*}
  \phi_{\theta,K}(f)\Psi_m=
\Gamma( U_{R})\left(\Gamma( U_{R})\phi(f)\Gamma( 
U_{R})\right)_{\theta,P}\Gamma( U_{R})\Psi_m.
\end{equation*}
\end{lemma}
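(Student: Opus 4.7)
The plan is to use the unitary equivalence $K_{\mu}=U_{R}P_{\mu}U_{R}$ (equation \ref{puk}) to rewrite every operator appearing in the warped convolution integral for $\phi_{\theta,K}(f)$ in terms of the momentum operator conjugated by $\Gamma(U_{R})$, and then factor the outer copies of $\Gamma(U_{R})$ out of the integral.

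First, since $U_{R}$ is self-adjoint and unitary on $\mathscr{H}_{1}$, one has $U_{R}^{2}=\mathbf{1}$ and hence $\Gamma(U_{R})^{2}=\mathbf{1}$ on each $\mathscr{H}_{m}^{+}$. By Stone's theorem applied to the unitarily equivalent generators, this lifts to the statement
\begin{equation*}
U(b)=e^{ib_{\mu}K^{\mu}}=\Gamma(U_{R})\,e^{ib_{\mu}P^{\mu}}\,\Gamma(U_{R})=\Gamma(U_{R})\,T(b)\,\Gamma(U_{R}),
\end{equation*}
valid on the dense domain $\Delta_{m}(R)=\Gamma(U_{R})\Delta_{m}(P)$. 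In particular, $\Gamma(U_{R})\Psi_{m}\in\Delta_{m}(P)$, so the warped-convolution integral defining $A_{\theta,P}$ with $A:=\Gamma(U_{R})\phi(f)\Gamma(U_{R})$ is well-defined on the vector $\Gamma(U_{R})\Psi_{m}$ by the standard momentum-operator deformation theory of \cite{BLS}.

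Next, using the two identities above I would rewrite the automorphism appearing in (\ref{defe}) as
\begin{equation*}
\alpha_{\theta y}(\phi(f))=U(\theta y)\,\phi(f)\,U(-\theta y)=\Gamma(U_{R})\,T(\theta y)\,A\,T(-\theta y)\,\Gamma(U_{R})=\Gamma(U_{R})\,\alpha^{P}_{\theta y}(A)\,\Gamma(U_{R}),
\end{equation*}
and similarly $U(k)=\Gamma(U_{R})T(k)\Gamma(U_{R})$. Substituting both into (\ref{defe}) and using $\Gamma(U_{R})^{2}=\mathbf{1}$ to cancel the two middle copies gives the pointwise identity
\begin{equation*}
\alpha_{\theta y}(\phi(f))\,U(k)\,\Psi_{m}=\Gamma(U_{R})\,\alpha^{P}_{\theta y}(A)\,T(k)\,\Gamma(U_{R})\Psi_{m}.
\end{equation*}

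Finally, since $\Gamma(U_{R})$ is bounded (in fact unitary) and independent of the integration variables $y,k$, it may be pulled outside the Bochner/strong integral. What remains inside the brackets is precisely the warped-convolution defining formula applied to the bounded/affiliated operator $A$ and the vector $\Gamma(U_{R})\Psi_{m}\in\Delta_{m}(P)$, i.e.\ $A_{\theta,P}\,\Gamma(U_{R})\Psi_{m}$. Reinserting the definition of $A$ yields the claimed identity. The main technical point to be careful about is the domain and convergence issue: one must verify that the passage $\Gamma(U_{R})\Psi_{m}\in\Delta_{m}(P)$ really does place the inner integral into the regime where warped convolutions with $P$ are known to converge in Hilbert-space norm, and that the exchange of $\Gamma(U_{R})$ with the integral is justified. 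Both follow directly from the unitarity of $\Gamma(U_{R})$ and the definition of $\Delta_{m}(R)$ via $\Gamma(U_{R})$, so no new analytic input is required beyond what has already been established in Sec.~3.1.
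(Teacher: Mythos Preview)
Your proof is correct and follows essentially the same route as the paper: both substitute $U(b)=\Gamma(U_{R})T(b)\Gamma(U_{R})$ into the warped-convolution integral, use $\Gamma(U_{R})^{2}=\mathbf{1}$ to collapse the inner copies, and then identify the resulting integral as the momentum-operator warped convolution of $\Gamma(U_{R})\phi(f)\Gamma(U_{R})$ acting on $\Gamma(U_{R})\Psi_{m}$. Your version is simply more explicit about the domain and convergence justifications that the paper leaves implicit.
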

\begin{proof} 
By using the unitary equivalence (\ref{puk}) the lemma is easily proven
\begin{align*}
  \phi_{\theta,K}(f)\Psi_m&=(2\pi)^{-d}
  \iint  d^{d}y d^{d}k e^{-iy_{\mu}k^{\mu}}U(\theta y)\phi(f)U(-\theta 
y+k)\Psi_{m}\\
&= (2\pi)^{-d}\iint  d^{d}y d^{d}k e^{-iy_{\mu}k^{\mu}}\Gamma( 
U_{R})T(\theta y)\Gamma( U_{R}) \phi(f)
\Gamma( U_{R})T(-\theta y+k)\Gamma( U_{R}) \Psi_{m}
\\
&=
\Gamma( U_{R})\left(\Gamma( U_{R})\phi(f)\Gamma( 
U_{R})\right)_{\theta,P}\Gamma( U_{R})\Psi_m. 
\end{align*}
\end{proof}

\begin{lemma} \label{ldk2}
  For  $\Phi_m \in \Delta_m(R)\subset \mathscr{H}_{m}^{+}$ the familiar 
bounds of  the free field hold for the deformed field 
$\phi_{\theta,K}(f)$ and therefore the deformation with the special 
conformal operators exists in the Hilbert space norm.
\end{lemma}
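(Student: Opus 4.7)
The plan is to reduce the problem to the already-understood case of the momentum-operator deformation by invoking Lemma \ref{ldk1}. The content of that lemma is that
\[
\phi_{\theta,K}(f)\Psi_m = \Gamma(U_R)\,\bigl(\Gamma(U_R)\phi(f)\Gamma(U_R)\bigr)_{\theta,P}\,\Gamma(U_R)\Psi_m,
\]
and since $\Gamma(U_R)$ is unitary on $\mathscr{H}_m^+$, the Hilbert space norm of the left side equals the norm of $\bigl(\Gamma(U_R)\phi(f)\Gamma(U_R)\bigr)_{\theta,P}\Gamma(U_R)\Psi_m$. Moreover, by the defining relation $\Delta_m(R)=\Gamma(U_R)\Delta_m(P)$, the vector $\Gamma(U_R)\Psi_m$ lies in $\Delta_m(P)$, which is precisely the domain on which the $P$-warped convolutions are known to be well behaved.

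First I would verify that the conjugated field $\widetilde{\phi}(f):=\Gamma(U_R)\phi(f)\Gamma(U_R)$ still has the structure of a sum of a creation and an annihilation operator, $\widetilde{\phi}(f)=a(\overline{\widetilde{f}^{-}})+a^{*}(\widetilde{f}^{+})$, with new one-particle test functions $\widetilde{f}^{\pm}\in\mathscr{H}_1$. This is where the one-particle action of $U_R$ as the conformal inversion is used: the resulting $\widetilde{f}^{\pm}$ are the images of $f^{\pm}$ under the induced unitary on $\mathscr{H}_1$, and the Schwartz/$L^2$-type bounds that the original $f^{\pm}$ satisfy carry over to $\widetilde{f}^{\pm}$ because $U_R$ is an isometry of $\mathscr{H}_1$.

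With that identification in hand, the deformation $\widetilde{\phi}(f)_{\theta,P}$ falls literally under the framework treated in \cite{BLS,BS}, where the essential self-adjointness of $P_\mu$ together with the spectrum condition $\mathrm{sp}(P)\subset\overline{V^+_0}$ yields the standard Fock-space bounds
\[
\bigl\|\widetilde{\phi}(f)_{\theta,P}\Phi_m\bigr\|\;\leq\;\sqrt{m+1}\,\bigl(\|\widetilde{f}^+\|+\|\widetilde{f}^-\|\bigr)\,\|\Phi_m\|,
\]
for $\Phi_m\in\Delta_m(P)$, via estimating the creation and annihilation contributions separately. Composing with the two unitaries $\Gamma(U_R)$ from Lemma \ref{ldk1}, and using that the $\mathscr{H}_1$-norms of $\widetilde{f}^{\pm}$ coincide with those of $f^{\pm}$, gives the familiar free-field-type bound for $\phi_{\theta,K}(f)$ on $\Delta_m(R)$. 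Convergence of the defining integral in Hilbert space norm then follows from an approximation argument: one approximates $\phi(f)$ from the $K$-side by its bounded spectral truncations $P_n\phi(f)P_n$ (where $P_n$ are the spectral projections of $K$), for which convergence in strong operator topology is guaranteed by \cite{BLS}, and uses the uniform bound just derived to pass to the limit.

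The main obstacle is the middle step, namely ensuring that $\widetilde{\phi}(f)=\Gamma(U_R)\phi(f)\Gamma(U_R)$ really is a well-defined operator-valued distribution of creation/annihilation type with controllable one-particle wave functions $\widetilde{f}^{\pm}\in\mathscr{H}_1$, rather than merely a formal conjugate. Once the action of $U_R$ on the single-particle space is made explicit (using the transformation rule (\ref{trafo1}) specialized to the inversion) and the $\mathscr{H}_1$-isometry is recorded, the rest reduces to the bounds already established for the momentum-deformed theory.
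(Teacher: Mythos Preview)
Your proposal is correct and follows essentially the same route as the paper: reduce via Lemma~\ref{ldk1} to the $P$-deformed field $(\Gamma(U_R)\phi(f)\Gamma(U_R))_{\theta,P}=(\phi(U_Rf))_{\theta,P}$ acting on $\Gamma(U_R)\Phi_m\in\Delta_m(P)$, invoke the known Fock-space bounds from \cite{GL1} for the momentum-deformed creation and annihilation parts, and use unitarity of $U_R$ on $\mathscr{H}_1$ to replace $\|U_Rf^{\pm}\|$ by $\|f^{\pm}\|$. The paper records the resulting bound directly and does not spell out your spectral-truncation approximation step, but the argument and its ingredients are the same.
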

\begin{proof} 
By using lemma \ref{ldk1} one obtains the familiar bounds for a free 
scalar field. For $\Phi_m \in
\Delta_m(R)$ there exists a $\Psi_m \in \Delta_m(P)$ such that the 
following holds
\begin{align*}&\left\Vert\phi_{\theta,K}(f)\Phi_m\right\Vert=
\left\Vert\phi_{\theta,K}(f)\Gamma( U_{R})\Psi_m\right\Vert= \left\Vert 
(\Gamma( U_{R})\phi(f) \Gamma( U_{R}))_{\theta,P}\Psi_m \right\Vert 
=\left\Vert (\phi( U_{R}f))_{\theta,P}\Psi_m \right\Vert \\&\leq 
\left\Vert ( a( U_{R}\overline{f^-}))_{\theta,P}\Psi_m \right\Vert
+\left\Vert (a^*( U_{R}f^+))_{\theta,P}\Psi_m \right\Vert \leq 
\left\Vert  U_{R}f^{+} \right\Vert^2   \left\Vert (N+1)^{1/2 }\Psi_m 
\right\Vert^2
+\\ & \left\Vert  U_{R}f^{-}\right\Vert^2   \left\Vert (N+1)^{1/2 
}\Psi_m\right\Vert^2
=  \left\Vert f^{+} \right\Vert^2   \left\Vert  (N+1)^{1/2 } \Psi_m 
\right\Vert^2
+ \left\Vert f^{-}\right\Vert^2   \left\Vert  (N+1)^{1/2 } 
\Psi_m\right\Vert^2,
\end{align*}
where in the last lines we used the Cauchy-Schwarz inequality, the 
bounds given in \cite{GL1} and the unitarity of  $U_{R}$.
\end{proof}
\section{Properties of the deformed quantum field}
In the following section we prove the Wightman properties of the 
deformed field. The Wightman axioms of covariance and locality are not 
satisfied, but are replaced by wedge covariance and wedge-locality.
The relation between the fields defined on a deformed spacetime and 
fields defined on the wedge is given by the the constructed map in 
\cite{BLS,GL1}. To use this map we give the transformation property of 
the deformed quantum field $\phi_{\theta}$ under Lorentz transformations 
and thus relate the skew-symmetric matrices to wedges.  Furthermore, we 
prove that the field obtained by the construction is a wedge 
Lorentz-covariant and wedge-local quantum field.
\subsection{Wightman properties of the deformed QF}
In this section we prove that the deformed field $\phi_{\theta,K}$ 
satisfies the Wightman properties with the exception of covariance and 
locality.
\begin{proposition}\label{prop1}
  Let $\theta$ be a real skew-symmetric matrix w.r.t. the Minkowski 
scalar-product on $\mathbb{R}^{d}$ and $f \in \mathscr{S}(\mathbb{R}^d)$.
\begin{description}
     \item[a)] The dense subspace $\mathcal{D}$ of vectors of finite 
particle number is contained in the domain 
$\mathcal{D}^{\theta,K}=\{\Psi\in \mathscr{H}|  \left\Vert 
\phi_{\theta,K}(f)\Psi\right\Vert^2 < \infty \}$ of any 
$\phi_{\theta,K}(f)$. Moreover, 
$\phi_{\theta,K}(f)\mathcal{D}\subset\mathcal{D}$ and 
$\phi_{\theta,K}(f)\Omega=\phi(f)\Omega$.
     \item[b)] For scalar fields deformed via warped convolutions and 
$\Psi\in\mathcal{D}$,
\begin{equation}
f\longmapsto\phi_{\theta}(f)\Psi
\end{equation}is a vector valued tempered distribution.
     \item[c)]

For $\Psi\in\mathcal{D}$  and $\phi_{\theta,K}(f)$  the following holds
\begin{equation}
\phi_{\theta,K}(f)^{*}\Psi=\phi_{\theta,K}(\overline{f})\Psi.
\end{equation}
For real $f\in\mathscr{S}(\mathbb{R}^{d})$, the deformed field 
$\phi_{\theta}(f)$ is essentially self-adjoint on
$\mathcal{D}$.
     \item[d)] The Reeh-Schlieder property holds: Given an  open set of 
spacetime $\mathcal{O}\subset \mathbb{R}^d$ then
\begin{equation}
\mathcal{D}_{\theta}(\mathcal{O}):=  
span\{\phi_{\theta}(f_1)\dots\phi_{\theta}(f_m)\Omega: m\in \mathbb{N}, 
f_1\dots f_m\in \mathscr{S}(\mathcal{O})\}
\end{equation}
is dense in $\mathscr{H}$.
  \end{description}
\end{proposition}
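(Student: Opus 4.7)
The plan is to derive all four properties by systematically exploiting Lemma \ref{ldk1}, which transports statements about the special-conformal-deformed field to the momentum-deformed field, where the analogous statements are known from \cite{GL1,BLS}. The only genuinely new ingredients come from the presence of $\Gamma(U_R)$ and the need to confirm that the vacuum is invariant under the group generated by the $K_\mu$.

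For part (a) I would first observe that $K_\mu\Omega=U_R P_\mu U_R\Omega=0$, because $U_R\Omega=\Omega$ and $P_\mu\Omega=0$, so $U(b)\Omega=\Omega$. Inserted into Definition \ref{deff}, the $k$-integration then yields $(2\pi)^d\delta^d(y)$, collapsing the warped integral to $\alpha_0(\phi(f))\Omega=\phi(f)\Omega$. The inclusion $\mathcal{D}\subset\mathcal{D}^{\theta,K}$ and the invariance $\phi_{\theta,K}(f)\mathcal{D}\subset\mathcal{D}$ then both follow from the $N$-bound of Lemma \ref{ldk2}, combined with the fact that the deformed creation and annihilation parts each shift particle number by exactly one, as in the undeformed case. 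For part (b), the same $N$-bound yields $\|\phi_{\theta,K}(f)\Psi\|\le C(\|f^+\|+\|f^-\|)\|(N+1)^{1/2}\Psi\|$ on every $m$-particle subspace, so I only need that $f\mapsto f^\pm|_{\partial V_+}$ is continuous from $\mathscr{S}(\mathbb{R}^d)$ into $L^2(d^n\mu)$, which is a routine estimate on the restriction of a Schwartz function to the forward lightcone.

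For part (c) I would take adjoints inside the integral defining $\phi_{\theta,K}(f)$, use $\phi(f)^*=\phi(\overline f)$ for the undeformed field together with unitarity of $U(b)$, and recognise the result as the warped convolution for $\overline f$; all the manipulations are legitimate on $\mathcal{D}$ by part (a). Essential self-adjointness for real $f$ then follows via Nelson's analytic vector theorem: coherent-state vectors serve as analytic vectors for $\phi_{\theta,K}(f)$ just as for the undeformed free field, because Lemma \ref{ldk1} shows that the iterated $n$-fold application satisfies the same factorial bounds $\sim(n!)^{1/2}$ after conjugation by $\Gamma(U_R)$.

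The principal difficulty lies in part (d). Since the deformed field is neither local nor translation covariant in the ordinary sense, the classical Reeh-Schlieder argument by analytic continuation in the translations is unavailable. My strategy is that of \cite{BLS,GL1}: one first verifies that the warped annihilation part still annihilates the vacuum, $a_{\theta,K}(h)\Omega=0$, using $U(k)\Omega=\Omega$ and $a(h)\Omega=0$. Then, given an $m$-fold product $\phi_{\theta,K}(f_1)\cdots\phi_{\theta,K}(f_m)\Omega$, one moves all annihilation parts to the right by the (deformed) commutation relations, thereby rewriting the vector as a linear combination of pure-creation vectors $a^*(g_1)\cdots a^*(g_k)\Omega$ whose entries $g_i$ depend linearly on the $f_j$ and in which the deformation appears only through momentum-dependent phase factors. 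Since the $f_j$ range over $\mathscr{S}(\mathcal{O})$, classical Reeh-Schlieder for the undeformed massless free field ensures that the corresponding undeformed span is already dense in $\mathscr{H}$; a careful accounting then shows that the momentum-dependent phases cannot shrink this span, so $\mathcal{D}_{\theta}(\mathcal{O})$ remains dense in $\mathscr{H}$. This phase-counting step is where the hard work sits.
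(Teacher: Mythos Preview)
Your treatment of parts a)--c) follows essentially the same route as the paper: the $N$-bounds from Lemma~\ref{ldk2}, the vacuum invariance $U(b)\Omega=\Omega$, and the adjoint computation (followed by a variable substitution) are exactly the ingredients used there. The paper cites \cite{BR} rather than invoking Nelson's theorem for essential self-adjointness, but the underlying mechanism is the same.

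Part d) is where your proposal diverges, and here there is a genuine missed idea. You dismiss the analytic-continuation Reeh--Schlieder argument as ``unavailable'' because the deformed field is not translation-covariant. But this is precisely the observation the paper exploits: by the unitary equivalence $K_\mu=U_R P_\mu U_R$, the special conformal generators have the \emph{same spectrum} as the momenta, contained in the closed forward cone. Hence the representation $b\mapsto U(b)=e^{ib\cdot K}$ enjoys exactly the spectral positivity needed to run the standard Reeh--Schlieder analyticity argument, with special conformal transformations playing the role of translations. This reduces density of $\mathcal{D}_\theta(\mathcal{O})$ to density of $\mathcal{D}_\theta(\mathbb{R}^d)$.

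That reduction is what makes the remainder tractable. Once the support restriction is dropped one may choose $f_j\in\mathscr{S}(\mathbb{R}^d)$ whose Fourier transforms avoid the past cone, so that $\phi_{\theta,K}(f_j)=a^*_{\theta,K}(f_j^+)$ and no annihilation parts occur. Lemma~\ref{ldk1} then converts the product into $\sqrt{m!}\,P_m\bigl(S_m(U_R f_1^+\otimes\cdots\otimes U_R f_m^+)\bigr)$, and density follows because both $\Gamma(U_R)$ and the phase multiplication operator $S_m$ are unitary. Your alternative---normal-ordering the deformed creation and annihilation parts for \emph{localized} $f_j$ and then doing a ``phase-counting''---runs into the obstruction that a compactly supported $f$ can never satisfy $f^-=0$, so commutator terms are unavoidable and the resulting vectors are not simply unitary phases times undeformed creation products. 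Your sketch does not indicate how to control those cross terms, and this is exactly the difficulty that the spectral reduction to $\mathbb{R}^d$ is designed to bypass.
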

\begin{proof} 
a) The fact that $\mathcal{D}\subset \mathcal{D}^{\theta}$,  follows 
directly from lemma \ref{ldk2} because the deformed scalar field 
satisfies the well known bounds of the free field. The fact that the 
deformed field acting on the vacuum is the same as the free field acting 
on $\Omega$, can be easily shown due to the property of the unitary 
operators $U(b)\Omega=\Omega.$
\\\\
b) By using lemma \ref{ldk2} one can see that the right hand side 
depends continuously on the function $f$, hence the temperateness of 
$f\longmapsto\phi_{\theta,K}(f)\Psi$,  $\Psi\in\mathcal{D}$ follows.\\\\
c) The hermiticity of the deformed field  $\phi_{\theta,K}(f)^{*}$ is 
proven in the following
   \begin{align*}
  \phi_{\theta,K}(f)^{*}\Psi&= (2\pi)^{-d}\left(\iint  d^{d}y d^{d}k 
e^{-iy_{\mu}k^{\mu}}\alpha_{\theta y}(\phi(f))U(k)\right)^{*}\Psi 
\\&=(2\pi)^{-d}
  \iint  d^{d}y d^{d}k e^{ iy_{\mu}k^{\mu}}U(-k)\alpha_{\theta 
y}(\phi(f))^* \Psi
  \\&=(2\pi)^{-d}
  \iint  d^{d}y d^{d}k e^{ iy_{\mu}k^{\mu}}\alpha_{\theta 
y}(\phi(\overline{f }))U(-k)\Psi
  =
\phi_{\theta,K}(\overline{f })\Psi, \qquad\qquad    \Psi\in\mathcal{D}.
\end{align*}
In the last lines we performed a variable substitution $(y\rightarrow 
y+\theta^{-1}k)$ and $(k\rightarrow -k)$.  The essential 
self-adjointness of the deformed field for real $f$ can be shown along 
the same lines as  in \cite{BR}.
\\\\
d) For the proof of the Reeh-Schlieder property we will make use of the 
unitary equivalence (\ref{puk}).
  First note that the spectral properties of the representation of the 
special conformal transformations  $U(y)$ are the same as for the 
representation of translations. This leads to the application of the 
standard Reeh-Schlieder argument \cite{SW} which states that that 
$\mathcal{D}_{\theta}(\mathcal{O})$ is dense in $\mathscr{H}$ if and 
only if $\mathcal{D}_{\theta}( {\mathbb{R}^d})$ is dense in 
$\mathscr{H}$.  We choose the functions  $f_1,\dots,f_m \in \mathscr 
{S}(\mathbb{R}^{d})$ such that the Fourier transforms of the functions 
do not intersect the past light cone and therefore the domain 
$\mathcal{D}_{\theta}({\mathbb{R}^d})$ consists of the following vectors
   \begin{align*} &
   \Gamma( U_{R})\phi_{\theta,K}(f_1)\dots \phi_{\theta,K}(f_m)\Omega=
\Gamma( U_{R})a_{\theta,K}^*(f^{+}_1)\dots a_{\theta,K}^*(f^{+}_m)\Omega 
\\&=
\Gamma( U_{R})\Gamma( U_{R})(\Gamma( U_{R})a^*(f^{+}_1)\Gamma( 
U_{R}))_{\theta,P} \dots (\Gamma( U_{R})a^*(f^{+}_m)\Gamma( 
U_{R}))_{\theta,P}\Gamma( U_{R})\Omega\\&=
  a_{\theta,P}^*( U_{R}f^{+}_1)  \dots  a_{\theta,P}^*( U_{R}f^{+}_m) 
\Omega=
  \sqrt{m!}P_{m}(S_m( U_{R}f^{+}_1\otimes \dots \otimes  U_{R} f^{+}_m)),
\end{align*}
where $P_m$ denotes the orthogonal projection from 
$\mathscr{H}_1^{\otimes m}$ onto its totally symmetric subspace 
$\mathscr{H}^{+}_m$, and $S_m \in \mathscr{B}(\mathscr{H}_1^{\otimes 
m})$ is the multiplication operator given as
    \begin{align*} S_m(p_1,\dots,p_m)=\prod \limits_{1\leq l < k \leq m} 
e^{{i}p_l \theta p_k }.
\end{align*}
Since the operator $\Gamma( U_{R})$ is a unitary operator the functions 
$ U_{R}f^{+}_k $  for $f^{+}_k \in \mathscr {S}(\mathbb{R}^{d})$  will 
give rise to dense sets of functions in $\mathscr{H}_1$. Following the 
same arguments as in \cite{GL1} the density of $\mathcal{D}_{\theta}( 
{\mathbb{R}^d})$ in $\mathscr{H}$ follows. Note that we proved the 
density for vectors $\Gamma( U_{R})\phi_{\theta,K}(f_1)\dots 
\phi_{\theta,K}(f_m)\Omega$ and not for the vectors without $\Gamma( 
U_{R})$ as stated in the proposition. We use the unitarity of $\Gamma( 
U_{R})$ to argue that vectors dense in $\mathscr{H}$ stay dense after 
the application of a unitary operator.
\end{proof}
\subsection{Wedge-covariant fields}
The authors in \cite{GL1} constructed a map $Q:W \mapsto Q(W)$ from a 
set $\mathcal{W}_{0}:=\mathscr{L}^{\uparrow}_{+}W_{1}$ of wedges, where 
$W_{1}:=\{x\in \mathbb {R}^d: x_1 > |x_0| \}$  to a set 
$\mathcal{Q}_{0}\subset \mathbb{R}^{-}_{d\times d}$ of skew-symmetric 
matrices. In the next step they considered the corresponding fields 
$\phi_{W}(x):=\phi(Q(W),x)$. The meaning of the correspondence is that 
the field $ \phi(Q(W),x)$ is a scalar field living on a NC spacetime 
which can be equivalently realised as a field defined on the wedge. 
\newline\newline
To show the covariance properties of the deformed quantum fields we use 
the homomorphism $ Q(W)$  to map the deformed scalar fields to quantum 
fields defined on a wedge. Let us first define the following map.

\begin{definition}Let  $\theta$ be a real skew-symmetric matrix on 
$\mathbb{R}^d$ then the map $\gamma_{\Lambda}(\theta)$
is defined as follows
\begin{align}\label{hm}
  \gamma_{\Lambda}(\theta):=
\left\{
\begin{array} {cc}
\Lambda\theta\Lambda^T, \qquad &\Lambda\in\mathcal{L}^{\uparrow}, \\ 
-\Lambda\theta\Lambda^T,\qquad &\Lambda\in\mathcal{L}^{\downarrow} .
\end{array} \right.
\end{align}
\end{definition}
\begin{definition}
$\theta$ is called an admissible matrix if the realisation of the 
homomorphism $Q(\Lambda W)$ defined by the map 
$\gamma_{\Lambda}(\theta)$   is a well defined mapping. This is the case 
iff $\theta$ has in n dimensions the following form
\begin{align}
\begin{pmatrix} 0 & \lambda & 0 & \cdots & 0 \\ \lambda& 0 & 0 & \cdots 
& 0  \\ 0 & 0 &0 &\cdots&0
\\ \vdots & \vdots &\vdots & \ddots& \vdots\\ 0 & 0 &0 & \cdots&0
  \end{pmatrix},\qquad \lambda\geq0.
\end{align}
For the physical most interesting case of 4 dimensions the 
skew-symmetric matrix $\theta$ has the more general form
\begin{align}\qquad\quad
\begin{pmatrix} 0 & \lambda & 0 & 0 \\ \lambda& 0 & 0 & 0  \\ 0 & 0 &0 &\eta
  \\ 0 & 0 &-\eta &0
  \end{pmatrix},\qquad \lambda\geq0, \eta \in \mathbb{R}.
\end{align}

\end{definition}
Before we use the map from the set of skew-symmetric matrices to the 
wedges we state the following lemma about the transformation properties 
of the deformed field.
\begin{lemma} \label{lemma2}
The transformation of the deformed particle annihilation and creation 
operator $a_{\theta,K}(\mathbf{p}),a_{\theta,K}^{*}(\mathbf{p})$, for 
$\mathbf{p}\in\partial V_{+}$ and $\theta$ being admissible, under the 
adjoint action $V(0,\Lambda)$ of the Lorentz group, 
$\Lambda\in\mathcal{L}$, is the following
\begin{align}
V(0,\Lambda)a_{\theta,K}(\mathbf{ p}) V(0,\Lambda)^{-1}=
a_{\gamma_{\Lambda}(\theta),K}(\pm \Lambda \mathbf{ p} ),\\
V(0,\Lambda)a^{*}_{\theta,K}( \mathbf{ p}) V(0,\Lambda)^{-1}=
a_{\gamma_{\Lambda}(\theta),K}^{*}(\pm \Lambda \mathbf{ p}  ),
\end{align}
where the first sign is for  $\Lambda\in\mathcal{L}^{\uparrow}$ and the 
second sign is for  $\Lambda\in\mathcal{L}^{\downarrow}$. Hence the 
deformed field $\phi_{\theta,K}( x)$ transforms
\begin{equation}
V(0,\Lambda)\phi_{\theta,K} ( x)V(0,\Lambda)^{-1}=
\phi_{\gamma_{\Lambda}(\theta) ,K}( \Lambda x).
\end{equation}
\end{lemma}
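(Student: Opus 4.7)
The plan is to compute the Lorentz-conjugated warped convolution directly from Definition \ref{deff}, using that the special conformal generators $K^\mu$ transform as a Lorentz four-vector. The commutation relation $[K_\rho,L_{\mu\nu}]=i(\eta_{\rho\mu}K_\nu-\eta_{\rho\nu}K_\mu)$ immediately yields, on the common dense invariant domain, $V(0,\Lambda)K^\mu V(0,\Lambda)^{-1}=\Lambda^\mu{}_\nu K^\nu$, and consequently $V(0,\Lambda)U(b)V(0,\Lambda)^{-1}=U(\Lambda b)$ for every $b\in\mathbb{R}^d$. An equally clean, more indirect route is via Lemma \ref{ldk1}: since the classical inversion $x\mapsto -x/x^2$ commutes with Lorentz transformations (because $x^2$ is Lorentz invariant), the operator $\Gamma(U_R)$ commutes with $V(0,\Lambda)$, so one can reduce the statement to the analogous transformation law for the momentum-deformed field $\phi_{\theta,P}$ already established in \cite{GL1}.

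Following the direct route, I would insert $V(0,\Lambda)^{-1}V(0,\Lambda)$ between every factor in the warped convolution integral (\ref{defe}) defining $\phi_{\theta,K}(f)$, apply the previous identity on the two $U$-factors together with the ordinary Poincar\'{e} covariance of the undeformed free field, and then perform the change of variables $y\mapsto\Lambda y$, $k\mapsto\Lambda k$. The Jacobian is unity since $|\det\Lambda|=1$, the Minkowski pairing $y\cdot k$ is Lorentz invariant, and the argument of the leftmost $U$-factor absorbs the Lorentz transformation into a redefinition of the deformation matrix as $\Lambda\theta\Lambda^T$. For $\Lambda\in\mathcal{L}^\uparrow$ this realises the right hand side as a warped convolution with deformation $\gamma_\Lambda(\theta)=\Lambda\theta\Lambda^T$, and specialising to sharp momenta via the Fourier representation of $\phi$ produces the stated formulas with $+\Lambda\mathbf{p}$.

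The main obstacle is the case $\Lambda\in\mathcal{L}^\downarrow$. A non-orthochronous $\Lambda$ maps $\partial V_+$ into $\partial V_-$, so $\Lambda\mathbf{p}$ must be replaced by $-\Lambda\mathbf{p}$ in order to stay on the mass shell on which $a_{\theta,K}$ is defined. The natural implementation of $V(0,\Lambda)$ on Fock space on this component is antiunitary, and when commuted through the warping phase $e^{-iy\cdot k}$ it performs a complex conjugation; this conjugation is exactly what produces the extra sign $\theta\mapsto-\Lambda\theta\Lambda^T$ in the effective deformation matrix, matching the definition of $\gamma_\Lambda$ in (\ref{hm}). Admissibility of $\gamma_\Lambda(\theta)$ in the sense of the subsequent definition has to be verified separately, so that both sides of the stated identities are genuine deformed operators on a common dense invariant domain. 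The transformation of the smeared field $\phi_{\theta,K}(x)$ then follows from the sharp-momentum formulas by linearity, since it is simply the sum of the deformed annihilation and creation parts.
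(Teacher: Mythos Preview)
Your proposal is correct and follows essentially the same route as the paper: derive $V(0,\Lambda)U(b)V(0,\Lambda)^{-1}=U(\Lambda b)$ from the commutator $[K_\rho,L_{\mu\nu}]$, insert $V(0,\Lambda)$ into the warped convolution integral, and absorb $\Lambda$ into a change of integration variables, with the antiunitary case producing the extra sign via complex conjugation of the phase $e^{-iy\cdot k}$. The paper records this last step by writing the phase as $e^{-i\sigma y\cdot k}$ and substituting $(y,k)\to(\sigma\Lambda^{T}y,\Lambda^{-1}k)$, which is equivalent to your formulation; your alternative route via Lemma~\ref{ldk1} and the Lorentz invariance of the inversion is a valid shortcut that the paper does not use.
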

\begin{proof}
The proof is done along the lines of \cite{BLS}. $V(0,\Lambda)$ is a 
unitary operator for $\Lambda\in\mathcal{L}^{\uparrow}$ and an 
antiunitary operator for $\Lambda\in\mathcal{L}^{\downarrow}$. Due to 
the commutator relation of the special conformal operator and the 
generator of the Lorentz transformations one obtains
\begin{equation}
V(0,\Lambda)U(x)V(0,\Lambda)^{-1}=U(\Lambda x), \qquad x \in \mathbb{R}^d.
  \end{equation}
Therefore, the deformed scalar field $\phi_{\theta,K}$ transforms under 
the adjoint action of the Lorentz group in the following way
\begin{align*}
V(0,\Lambda)
\phi_{\theta,K} ( x)V(0,\Lambda)^{-1}&= (2\pi)^{-d}V(0,\Lambda)\iint  
d^{d}y d^{d}k e^{ -iy_{\mu}k^{\mu}}\alpha_{\theta y}(\phi(x))U( 
k)V(0,\Lambda)^{-1}\\&=(2\pi)^{-d}
\iint  d^{d}y d^{d}k e^{ - i\sigma y_{\mu}k^{\mu}} \alpha_{\Lambda 
\theta y}( V(0,\Lambda)\phi( x)V(0,\Lambda)^{-1})U( \Lambda k)\\ 
&= (2\pi)^{-d}
\iint  d^{d}y d^{d}k e^{ 
-iy_{\mu}k^{\mu}}\alpha_{\gamma_{\Lambda}(\theta)    y}(  \phi(
\Lambda x) ) U(   k)\\&=\phi_{\gamma_{\Lambda}(\theta),K} (\Lambda x),
\end{align*}
  where $\sigma$ is $+1$ if V is unitary and $-1$ if V is antiunitary. 
Moreover in the last lines we performed the integration variable 
substitutions $(y,k)\rightarrow(\sigma\Lambda^Ty, \Lambda^{-1}k)$. 
\end{proof} 
In the next step we use the homomorphism (\ref{hm}) to map the deformed 
field to a field defined on a wedge. Furthermore we show that the field 
deformed with the special conformal operator is a wedge-covariant 
quantum field which transforms covariantly under the adjoint action of 
the Lorentz group $V(0,\Lambda)$. For this purpose let us first 
introduce the notion of a  wedge-covariant quantum field, \cite{GL1}.

\begin{definition} Let $\phi=\{\phi_W: W\in\mathcal{W}_{0}\}$ denote the family of fields satisfying the domain and continuity assumptions of the Wightman axioms. Then the field  $\phi $ is 
defined to be a wedge Lorentz-covariant quantum field if the following 
condition is satisfied: \begin{itemize}
      \item   For any $W \in\mathcal{W}_{0} $ and 
$f\in\mathscr{S}(\mathbb{R}^d)$ the following holds
\begin{align*}
      V( \Lambda)\phi_W(f)V( \Lambda)^{-1}&=\phi_{\Lambda W}(f\circ( 
\Lambda)^{-1} ),\qquad  \Lambda\in \mathcal{L}^{\uparrow}_{+},
\\
      V( j)\phi_W(f)V( j)^{-1}&=\phi_{ jW}(\overline{f}\circ j)^{-1} .
\end{align*}

   \end{itemize}

\end{definition}
We use the homomorphism $Q:W \mapsto Q(W)$  to define the deformed fields 
as quantum fields defined on the wedge, this is done in the following way

\begin{equation}\label{e1}
\phi_W(f):=\phi(Q(W),f)=\phi_{\theta,K}( f).
\end{equation}

\begin{proposition}  The family of fields $\phi=\{\phi_W: W\in\mathcal{W}_{0}\}$ 
defined by the deformation with the special conformal operators are 
wedge-covariant quantum fields on the Bosonic Fock space, w.r.t. the 
unitary representation $V$ of the Lorentz group.
  \end{proposition}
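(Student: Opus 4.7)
The plan is to reduce the wedge-covariance condition to a direct combination of Lemma~\ref{lemma2}, which records the pointwise Lorentz transformation of the deformed field, and the defining covariance property of the homomorphism $Q$, namely $Q(\Lambda W)=\gamma_{\Lambda}(Q(W))$, which is exactly what (\ref{hm}) encodes on the admissible matrices. Because $\phi_{W}(f)$ is defined through (\ref{e1}) as $\phi_{Q(W),K}(f)$, the proposition reduces to proving the two smeared identities
\begin{align*}
V(\Lambda)\phi_{\theta,K}(f)V(\Lambda)^{-1} &= \phi_{\gamma_{\Lambda}(\theta),K}(f\circ\Lambda^{-1}), && \Lambda\in\mathcal{L}_{+}^{\uparrow},\\
V(j)\phi_{\theta,K}(f)V(j)^{-1} &= \phi_{\gamma_{j}(\theta),K}(\overline{f}\circ j^{-1}), && j\in\mathcal{L}^{\downarrow},
\end{align*}
and then relabelling $\theta=Q(W)$ so that $\gamma_{\Lambda}(\theta)=Q(\Lambda W)$. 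The domain and continuity conditions from the Wightman setting are already provided by Proposition~\ref{prop1}.

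For the orthochronous case, the approach is to write $\phi_{\theta,K}(f)=\int d^{d}x\,f(x)\,\phi_{\theta,K}(x)$ and to pull the unitary $V(\Lambda)$ through the integral without any complex conjugation. Lemma~\ref{lemma2} then converts the integrand into $f(x)\,\phi_{\gamma_{\Lambda}(\theta),K}(\Lambda x)$, and the change of variables $y=\Lambda x$, whose Jacobian is unity, rewrites this as $\int d^{d}y\,(f\circ\Lambda^{-1})(y)\,\phi_{\gamma_{\Lambda}(\theta),K}(y)=\phi_{\gamma_{\Lambda}(\theta),K}(f\circ\Lambda^{-1})$. Using $\gamma_{\Lambda}(Q(W))=Q(\Lambda W)$ and the definition (\ref{e1}) closes this case.

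For $j\in\mathcal{L}^{\downarrow}$ the same scheme applies, but now $V(j)$ is antiunitary, so commuting it through the integral converts $f(x)$ into $\overline{f}(x)$. The essential observation is that the sign flip built into $\gamma_{j}$ in (\ref{hm}) and the $\pm$ choices in the adjoint action on $a_{\theta,K}^{(\ast)}$ in Lemma~\ref{lemma2} are precisely those that make the pointwise statement $V(j)\phi_{\theta,K}(x)V(j)^{-1}=\phi_{\gamma_{j}(\theta),K}(jx)$ still hold. The substitution $y=jx$ then produces $\phi_{\gamma_{j}(\theta),K}(\overline{f}\circ j^{-1})=\phi_{jW}(\overline{f}\circ j^{-1})$. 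The main obstacle is precisely this antiunitary bookkeeping: one has to check carefully that the complex conjugation generated by pulling $V(j)$ inside the smearing integral, the sign in $\gamma_{j}$, and the sign in the transformation of the creation/annihilation operators in Lemma~\ref{lemma2} combine consistently so that no stray minus sign or conjugation remains.
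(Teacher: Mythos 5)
Your proposal is correct and follows essentially the same route as the paper: the paper's proof likewise consists of applying Lemma~\ref{lemma2} to get $V(0,\Lambda)\phi_{\theta,K}(x)V(0,\Lambda)^{-1}=\phi_{\gamma_{\Lambda}(\theta),K}(\Lambda x)$ and then invoking $Q(\Lambda W)=\gamma_{\Lambda}(Q(W))$ together with the definition (\ref{e1}). The only difference is that you spell out the passage from the pointwise to the smeared statement (change of variables, and the complex conjugation produced by the antiunitary $V(j)$), which the paper leaves implicit.
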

\begin{proof}
 Following lemma (\ref{lemma2}), the 
deformed field $\phi_{\theta,K} (x)$ transforms under the adjoint action  
$V$ of the Lorentz group in the following way
\begin{equation}V(0,\Lambda)\phi_{W} (x)V(0,\Lambda)^{-1}=
V(0,\Lambda)\phi_{\theta,K} (x)V(0,\Lambda)^{-1}=
\phi_{\gamma_{\Lambda}(\theta) ,K}( \Lambda x)=
\phi_{\Lambda W,K}( \Lambda x),
\end{equation}
where in the last lines we applied the map $Q(\Lambda 
W)=\gamma_{\Lambda}(Q(W))=\gamma_{\Lambda}(\theta)$. Therefore, one 
obtains the wedge-covariance property of the scalar field under the 
Lorentz group.
\end{proof}
A few comments are in order. The covariance property is given in the 4 
dimensional case as well. As explained in Sec. 3, a unitary 
representation for the whole conformal group  does not exist due to the 
absolute value of the scale factor.   We will show in the next section 
that the scale factor is positive for a field localised in the wedge and 
therefore one has a unitary representation of the whole conformal group.
\subsection{Wedge-local fields}
In this section we will show that the wedge-covariant quantum field 
defined in the last section, is a \textit{wedge-local} field. We first 
define the notion of the wedge-local field.

\begin{definition} The fields $\phi=\{\phi_W: W\in\mathcal{W}_{0}\}$ are 
said to be wedge-local if the following commutator relation is satisfied
\begin{equation}
[\phi_{W_1}(f),\phi_{-W_1}(g)]\Psi=0,\qquad \Psi\in\mathcal{D},
\end{equation}
for all $f,g \in C_{0}^{\infty}(\mathbb{R}^d)$ with $\mathrm{supp}$ 
$f\subset W_1 $ and $ \mathrm{supp} $ $g\subset -W_1 $.
\end{definition}
To show that the fields defined in the last section are wedge-local, we use the following proposition (2.10, \cite{BLS}) and lemma \ref{lkw}.

\begin{proposition} \label{p1}Let the scalar fields $\phi(f)$, $\phi(g)$ 
be such that $[\alpha_{\theta v}(\phi(f)),\alpha_{-\theta 
u}(\phi(g))]=0$ for all $v,u$ $\in$ spU and for $f,g \in 
C_{0}^{\infty}(\mathbb{R}^d)$. Then
\begin{equation}\label{wlf}
[ \phi_{\theta,K}(f) ,\phi_{-\theta,K}(g)]\Psi=0,\qquad \Psi\in\mathcal{D} .
\end{equation}
\end{proposition}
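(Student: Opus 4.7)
The plan is to reduce the claim to the corresponding statement for the momentum-deformed free field, namely Proposition~2.10 of~\cite{BLS}, by means of the unitary equivalence established in Lemma~\ref{ldk1}.

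First, I apply Lemma~\ref{ldk1} to both factors of the commutator. Writing $\widetilde{\phi}(h):=\Gamma(U_R)\phi(h)\Gamma(U_R)$ and using the unitarity and involutivity of $\Gamma(U_R)$, one obtains
\begin{equation*}
[\phi_{\theta,K}(f),\phi_{-\theta,K}(g)]\Psi=\Gamma(U_R)\,[\,\widetilde{\phi}(f)_{\theta,P},\,\widetilde{\phi}(g)_{-\theta,P}\,]\,\Gamma(U_R)\Psi,\qquad \Psi\in\mathcal{D}.
\end{equation*}
Since $\Gamma(U_R)$ preserves particle number, $\Gamma(U_R)\Psi$ lies again in $\mathcal{D}$, so it suffices to show that the inner commutator vanishes there.

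Next I transport the hypothesis from the $K$-action to the $P$-action. From $K^{\mu}=U_{R}P^{\mu}U_{R}$ on the one-particle space, the second-quantised unitary groups satisfy $U(b)=\Gamma(U_R)T(b)\Gamma(U_R)$ on $\mathscr{H}^{+}$, whence
\begin{equation*}
\alpha^{K}_{b}(A)=\Gamma(U_R)\,\alpha^{P}_{b}\!\bigl(\Gamma(U_R)A\Gamma(U_R)\bigr)\,\Gamma(U_R).
\end{equation*}
Substituting $A=\phi(f),\phi(g)$ and using that $\mathrm{sp}\,K=\mathrm{sp}\,P=\overline{V^{+}_0}$ by the remark following~(\ref{puk}), the hypothesis $[\alpha^{K}_{\theta v}(\phi(f)),\alpha^{K}_{-\theta u}(\phi(g))]=0$ for $v,u\in\mathrm{sp}\,U$ becomes, after conjugation by $\Gamma(U_R)$, exactly $[\alpha^{P}_{\theta v}(\widetilde{\phi}(f)),\alpha^{P}_{-\theta u}(\widetilde{\phi}(g))]=0$ for all $v,u\in\mathrm{sp}\,P$. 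This is precisely the hypothesis of Proposition~2.10 of~\cite{BLS} applied to $\widetilde{\phi}(f)_{\theta,P}$ and $\widetilde{\phi}(g)_{-\theta,P}$, whose conclusion supplies the vanishing of the inner commutator on $\mathcal{D}$.

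The main obstacle I anticipate is analytical rather than algebraic: the fields $\widetilde{\phi}(h)$ are unbounded, so the BLS theorem must be shown to apply to their $P$-deformations on a common dense invariant domain. Lemma~\ref{ldk2} provides the free-field energy bound for $\phi_{\theta,K}(f)$ on $\Delta_m(R)$, and by conjugating with $\Gamma(U_R)$ the same type of bound transfers to $\widetilde{\phi}(h)_{\theta,P}$ on $\Delta_m(P)=\Gamma(U_R)\Delta_m(R)$; this domain control is what allows the BLS commutation argument to be invoked without modification.
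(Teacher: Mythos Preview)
Your argument is correct, but it takes a detour that the paper avoids. In the paper, Proposition~\ref{p1} is not proved separately: it is simply stated as an instance of Proposition~2.10 of~\cite{BLS}, which is formulated for an \emph{arbitrary} strongly continuous unitary representation of $\mathbb{R}^{d}$. Since $b\mapsto U(b)=e^{ib_\mu K^\mu}$ is such a representation (by the essential self-adjointness of $K_\mu$ discussed in Section~3.1), the BLS result applies directly with $\alpha=\alpha^{K}$ and $\mathrm{sp}\,U=\mathrm{sp}\,K$, yielding~(\ref{wlf}) without ever passing to the $P$-action.

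What you do differently is to conjugate by $\Gamma(U_R)$ via Lemma~\ref{ldk1}, reduce the $K$-deformed commutator to a $P$-deformed commutator of the transformed fields $\widetilde{\phi}(h)=\Gamma(U_R)\phi(h)\Gamma(U_R)$, transport the commutation hypothesis across the same unitary equivalence, and only then invoke BLS. This is a valid route and has the merit of making the domain and boundedness control completely explicit by reducing to the momentum case treated in~\cite{GL1}; in particular your final paragraph on transferring the energy bounds from $\Delta_m(R)$ to $\Delta_m(P)$ is exactly the mechanism behind Lemma~\ref{ldk2}. The paper's direct approach, by contrast, trades this explicitness for brevity: it relies on the generality of~\cite{BLS} (and on Lemma~\ref{ldk2} for the well-definedness of $\phi_{\theta,K}(f)$) and skips the conjugation step entirely.
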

\begin{lemma} \label{lkw}The special conformal transformations $U_{\theta v}$, with $v$ $\in$ spU and 
$\theta$ being admissible, map the right wedge into the right wedge  
$U_{\theta v}({W}_{1})\subset {W}_{1}$. Furthermore, the special conformal transformations $U_{-\theta u}$, with $u$ $\in$ spU and $\theta$ being admissible, map the left wedge into the left wedge  
$U_{-\theta v}(-{W}_{1})\subset -{W}_{1}$.
\end{lemma}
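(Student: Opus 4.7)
The natural strategy is to exploit the factorisation $U(b) = U_R T(b) U_R$ stated in Sec.~3.1 and to check separately that each factor sends $W_1$ into itself once $b := \theta v$ lies in the closure $\overline{W_1}$ of the right wedge. This reduces the lemma to three short one-liners.

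First I would verify that admissibility of $\theta$ together with the spectral condition $v \in \overline{V^+_0}$ (which comes from the unitary equivalence $K_\mu = U_R P_\mu U_R$ of Sec.~3.1) forces $b \in \overline{W_1}$. Plugging $v$ into the admissible four-dimensional matrix gives $\theta v = (\lambda v^1,\lambda v^0,\eta v^3,-\eta v^2)$, so $(\theta v)^1 = \lambda v^0 \ge \lambda|v^1| = |(\theta v)^0|$ (using $\lambda \ge 0$ and $v^0 \ge |v^1|$), while no constraint on the transverse components is needed because none is imposed by $W_1 = \{x : x^1 > |x^0|\}$. The same elementary rearrangement, with the obvious sign flip, gives $-\theta u \in \overline{-W_1}$.

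Next I would check the two stability facts. For $x \in W_1$ one has $x^2 = (x^0)^2 - (x^1)^2 - x_\perp^2 < 0$, so $U_R(x) = x/|x^2|$ is a strictly positive rescaling of $x$ and preserves the inequality $x^1 > |x^0|$; in particular $U_R$ is defined throughout $W_1$ and maps $W_1$ into $W_1$. For $b \in \overline{W_1}$ the triangle inequality gives $(x+b)^1 = x^1 + b^1 > |x^0| + |b^0| \ge |x^0 + b^0|$, hence $T(b) W_1 \subset W_1$. Composing the three maps yields $U(\theta v) W_1 = U_R(T(\theta v)(U_R(W_1))) \subset W_1$, which is the first assertion; the second claim follows from the same three steps with $W_1$ replaced everywhere by $-W_1$.

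What I expect to be the real point of attention, rather than a genuine difficulty, is the decision to use the factorisation rather than to attack $U(\theta v) x \in W_1$ directly: a head-on computation would require simultaneously showing $\sigma_{\theta v}(x) \ne 0$ and bounding the transformed coordinates, and both calculations quickly become opaque. The factorisation dissolves both issues at once, because the intermediate point $U_R(x) + b$ already lies in $W_1$, where squared intervals are strictly negative; this keeps the outer inversion away from its singular set and is equivalent to the non-vanishing of $\sigma_{\theta v}$. Beyond this, the only place one has to be careful is the convention ``skew-symmetric with respect to the Minkowski metric'' when reading off the components of $\theta v$.
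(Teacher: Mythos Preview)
Your argument is correct and cleaner than the paper's, but it is genuinely different in structure. The paper does not invoke the factorisation $U(b)=U_R T(b) U_R$; instead it works directly with the explicit formula $x_{\theta v}^\mu=(x^\mu-(\theta v)^\mu x^2)/\sigma_{\theta v}(x)$. It first establishes the auxiliary fact that $x+\kappa\,\theta v\in W_1$ for all $\kappa>0$ (essentially your translation step, proved by squaring and checking the sign of the cross term), then applies this with $\kappa=-x^2$ twice: once to deduce $\sigma_{\theta v}(x)>0$ from $(x-x^2\theta v)^2<0$, and once more to conclude that the numerator satisfies the wedge inequality. Your factorisation route replaces this two-in-one trick by three independent and transparent stability checks ($U_R W_1\subset W_1$, $T(\overline{W_1})W_1\subset W_1$, $U_R W_1\subset W_1$ again), and the positivity of $\sigma_{\theta v}$---which the paper needs separately for Lemma~\ref{l1}---drops out for free as the statement that the intermediate point $U_R(x)+\theta v$ lies in $W_1$, where squared intervals are strictly negative. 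The paper's direct computation has the mild advantage of never having to worry about whether the pointwise identity $U(b)=U_R T(b)U_R$ holds on all of $W_1$ (it does, as you note, precisely because $x^2<0$ there), but your approach is more modular and makes the geometric content visible.
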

\begin{proof} 
We first prove for  $x^{\mu}\in W_1 $,  $v$ $\in$ spU, $\theta$ 
being admissible and $\kappa>0$, that the vector $x'^{\mu}:=x^{\mu}+\kappa (\theta 
v)^{\mu} \in W_1$.
\begin{align*}
x'^1 &>|x'^0 |\\
x^{1}+\kappa \lambda v^{0}&>|x^{0}+\kappa \lambda v^{1} |.
\end{align*}
The right hand side is obviously greater than zero and therefore we 
square both sides and obtain
\begin{align*}
\kappa^2 \lambda^2 (v_0^{2}-v_1^{2})-(x_0^2-x_1^2)-2\kappa \lambda (   
v^1 x^{0}-  v^{0} x^1)
&>0.
\end{align*}
Due to the fact that the sum of the first two terms is greater than 
zero, we are only left with proving that the following inequality
\begin{align*}
   \lambda v^1 x^{0}-  \lambda v^{0} x^1\leq 0
\end{align*}
is satisfied.
Equality only holds if $v_{0}=0$ or $\lambda$ is zero. So if 
$v_0,\lambda\neq0$ we have to show the following
\begin{align} \label{i1}
     x^1>\frac{v^1}{v^
     0} x^{0}.
\end{align}
(\ref{i1}) is satisfied, because the stronger inequality
\begin{align*}
     x_1>\underbrace{|\frac{v_1}{v_
     0}|}_{0<\cdots<1}| x_{0}|
\end{align*}
holds. By using the vector $x'^{\mu}$ we now can easily prove that 
$x^{\mu}_{\theta v}\in W_1$.  To show that $x^{\mu}_{\theta v}\in W_1$ 
the following inequality must be satisfied.
\begin{align}
x^1_{\theta v}&>|x^0_{\theta v}|\end{align}  \begin{align}\label{u1}
({x^{1}-{(\theta v)}^{1}x^2})/({1-2(\theta v)\cdot x+(\theta v)^2x^2})&>|
({x^{0}-{(\theta v)}^{0}x^2})/({1-2(\theta v)\cdot x+(\theta v)^2x^2})|.
\end{align}
Positivity of the denominator can be seen by taking the vector 
$x'^{\mu}$ as defined above and setting $\kappa =-x^2>0$. From $x'^2<0$ 
we obtain
\begin{align*}
  x'^2=(x^{\mu}-x^2 (\theta v)^{\mu})(x_{\mu}-x^2 (\theta v)_{\mu})=
\underbrace{x^2}_{<0}\left(1-2x_{\mu}(\theta v)^{\mu}+x^2 (\theta 
v)_{\mu} \right)<0.
\end{align*}
 From the inequality it follows that the the denominator in (\ref{u1}) 
is positive and therefore one is left with proving
\begin{align}
({x^{1}-{(\theta v)}^{1}x^2})&>|
({x^{0}-{(\theta v)}^{0}x^2})|.
\end{align}
By choosing $\kappa =-x^2$ this is exactly the inequality for the vector 
$x'^{\mu} \in W_1$. Therefore, the special conformal transformed 
coordinate is still in the right wedge. The proof that the special 
conformal transformations map the left wedge into the left wedge is 
analogous. \end{proof} 
\begin{proposition} \label{propt}
For $n=4l+1$, where $l\in \mathbb{N}_0$ the family of fields $\phi=\{\phi_W: 
W\in\mathcal{W}_{0}\}$ 
are wedge-local fields on the Bosonic Fockspace $\mathscr{H}^+ $.
\end{proposition}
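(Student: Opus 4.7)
The plan is to reduce the wedge-locality claim (\ref{wlf}) to the hypothesis of Proposition \ref{p1}: it suffices to prove that
\begin{equation*}
[\alpha_{\theta v}(\phi(f)),\alpha_{-\theta u}(\phi(g))]\Psi=0,\qquad \Psi\in\mathcal{D},
\end{equation*}
for all $v,u$ in the joint spectrum of $K^{\mu}$, and for $f\in C_{0}^{\infty}(W_{1})$, $g\in C_{0}^{\infty}(-W_{1})$. By the unitary equivalence (\ref{puk}) this spectrum coincides with that of the momentum operator, namely the closed forward cone $\overline{V_{0}^{+}}$.

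Next I would feed in the explicit pointwise transformation law (\ref{trafo1}) and smear:
\begin{equation*}
\alpha_{\theta v}(\phi(f))=\int d^{d}x\,f(x)\,\sigma_{\theta v}(x)^{(1-n)/2}\phi(x_{\theta v}),
\end{equation*}
with a similar expression for $\alpha_{-\theta u}(\phi(g))$. For $n=4l+1$ the exponent $(1-n)/2=-2l$ is an even non-positive integer, so the prefactor is real and well-defined whenever the denominator $\sigma_{\theta v}(x)$ does not vanish --- this is precisely why the restriction $n=4l+1$ allows a genuine unitary implementation of the conformal group on these smeared operators. Lemma \ref{lkw} then says that for $v\in\overline{V_{0}^{+}}$ and admissible $\theta$ one has $U_{\theta v}(W_{1})\subset W_{1}$ and $U_{-\theta u}(-W_{1})\subset -W_{1}$; consequently, for every $x\in\mathrm{supp}\,f$ and $y\in\mathrm{supp}\,g$ the transformed points $x_{\theta v}\in W_{1}$ and $y_{-\theta u}\in -W_{1}$ are spacelike separated. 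Local commutativity of the undeformed free scalar field gives $[\phi(x_{\theta v}),\phi(y_{-\theta u})]=0$, and integration against the smooth prefactors yields vanishing of the commutator; Proposition \ref{p1} then closes the argument.

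The subtle step, and the main obstacle, is ensuring that the scale factor $\sigma_{\theta v}(x)^{-2l}$ is finite on $\mathrm{supp}\,f\subset W_{1}$ so that the smeared operator is a bona fide field. This is in fact established already inside the proof of lemma \ref{lkw}, where positivity of $1-2(\theta v)\cdot x+(\theta v)^{2}x^{2}$ for $x\in W_{1}$, $v\in\overline{V_{0}^{+}}$, and admissible $\theta$ is demonstrated via the auxiliary vector $x'=x-x^{2}(\theta v)$ with $-x^{2}>0$. Once this positivity is invoked, the remainder is a routine combination of the transformation formula, lemma \ref{lkw}, and Proposition \ref{p1}; it is precisely this wedge-restricted positivity that makes the $n=4l+1$ case tractable (and also suggests how the four-dimensional physical case can be recovered by the same wedge-support argument alluded to after Proposition 4.3).
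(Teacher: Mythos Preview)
Your proposal is correct and follows essentially the same route as the paper: reduce to the hypothesis of Proposition~\ref{p1}, insert the transformation law (\ref{trafo1}), invoke Lemma~\ref{lkw} to keep the transformed supports in $W_{1}$ and $-W_{1}$, and conclude from locality of the free field. Your write-up is in fact more explicit than the paper's own proof in spelling out why $n=4l+1$ is needed (the exponent $(1-n)/2=-2l$ is an even integer) and in flagging the positivity of $\sigma_{\theta v}(x)$ on $W_{1}$, which the paper leaves implicit here and only revisits in Lemma~\ref{l1}.
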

\begin{proof} We first prove that the expression
$[\alpha_{\theta v}(\phi(f)),\alpha_{-\theta u}(\phi(g))]$ vanishes for 
all $v,u$ $\in$ spU and for $f \in C_{0}^{\infty}(W_1)$, $g \in 
C_{0}^{\infty}(-W_1)$. By using proposition \ref{p1} it then follows, that the commutator
$[ \phi_{W_1}(f) ,\phi_{-W_1}(g)]$ vanishes.  
\begin{align*}&
  [\alpha_{\theta v}(\phi(f)),\alpha_{-\theta u}(\phi(g))]=(2\pi)^{-2(n+1)}
  \iint d^{n+1}x d^{n+1}y
f(x)g(y)[\alpha_{\theta v}(\phi(x)),\alpha_{-\theta u}(\phi(y))]\\
&= (2\pi)^{-2(n+1)} \iint d^{n+1}x d^{n+1}y
f(x)g(y) {\sigma_{\theta v}(x)}^{ \frac{1-n}{2}} {\sigma_{-\theta 
u}(y)}^{ \frac{1-n}{2}} [\phi(x_{\theta v}), \phi(y_{-\theta u})]  = 0.
\end{align*}
In the last line we applied lemma \ref{lkw} to prove that after the special 
conformal transformation, the support of the field $\phi_{W_1}$ stays in 
the right wedge and the support of the field $\phi_{-W_1}$ stays in the 
left wedge. Therefore, the supports of the fields are space-like separated, hence they
commute.
 \end{proof} 
\begin{lemma} \label{l1}In four dimensions a unitary representation for the whole conformal 
group, which gives the correct transformation law (\ref{trafo1}), exists for the fields $\phi_{\theta,K}(f)$ with 
$f\in C_{0}^{\infty}(W_1)$. The same holds for the field $\phi_{-\theta,K}(g)$ with 
$g\in C_{0}^{\infty}(-W_1)$.
\end{lemma}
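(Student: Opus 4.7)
The plan is to show that in four spacetime dimensions the obstruction that normally prevents a unitary representation of the full conformal group does not materialise once the field is localised in $W_1$. Here $n=3$ and the exponent $(1-n)/2=-1$ is a negative \emph{odd} integer, so (\ref{trafo1}) reads $\alpha_b(\phi(x))=\sigma_b(x)^{-1}\phi(x_b)$ and therefore inherits the sign of $\sigma_b(x)=1-2b\cdot x+b^2x^2$. Any path in $b$-space along which $\sigma_b(x)$ changes sign would force one to pass to a covering-group representation. I will show that on the support of $f\in C_{0}^{\infty}(W_1)$ only parameters $b$ with $\sigma_b(x)>0$ actually enter.

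First I would identify which $b$ appear in the conformal action on $\phi_{\theta,K}(f)$. Using the spectral resolution $U(k)=\int e^{ik\cdot v}\,dE(v)$ in the definition (\ref{defe}) and performing the $y$-integration first converts the warped convolution into a spectral integral over $v\in\operatorname{sp}U$, so that the parameters $b=\theta y$ showing up in $\alpha_{\theta y}$ are effectively restricted to $b=\theta v$ with $v$ in the joint spectrum of $K_{\mu}$. By the unitary equivalence (\ref{puk}) this spectrum coincides with that of $P_{\mu}$, namely the closed forward cone $\overline{V^{+}_{0}}$. Hence the only scale factors that effectively appear in the transformation of $\phi_{\theta,K}(f)$ are of the form $\sigma_{\theta v}(x)$ with $v\in\operatorname{sp}U$ and $x\in W_1$.

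For such parameters the strict positivity is already built into the proof of Lemma \ref{lkw}: the auxiliary vector $x'^{\mu}=x^{\mu}-x^2(\theta v)^{\mu}$ satisfies $x'^{2}=x^{2}\,\sigma_{\theta v}(x)$, so combining $x^{2}<0$ on $W_1$ with the inequality $x'^{2}<0$ derived there gives $\sigma_{\theta v}(x)>0$ for every admissible $\theta$. Consequently $\sigma_{\theta v}(x)^{-1}$ is an unambiguous positive real-valued function on $\operatorname{supp}f$, no branch choice or sign cocycle is needed, and the exponential map $b\mapsto e^{ib_{\mu}K^{\mu}}$ combines with the unitary representations of the Poincar\'e and dilation subgroups recalled in Sec.~3 into a bona fide unitary representation of $SO_{0}(2,4)$ on the wedge-localised sector which realises (\ref{trafo1}) correctly. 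The corresponding statement for $\phi_{-\theta,K}(g)$ with $g\in C_{0}^{\infty}(-W_1)$ follows verbatim from the second clause of Lemma \ref{lkw}.

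The step I expect to be the main obstacle is the first one: making the spectral reduction of the $y$-integration rigorous enough that only $b=\theta v$ with $v\in\operatorname{sp}U$ (and not arbitrary $y\in\mathbb{R}^{d}$) truly contributes to the transformation. A convenient way to circumvent any remaining subtleties would be to work instead on the dense domain of analytic vectors for $K_{\mu}$, where the spectral integral and the $y$-integral can be freely exchanged. Once this reduction is justified, everything else is direct assembly from the positivity of $\sigma_{\theta v}$ on $W_1$ and the transformation law (\ref{trafo1}).
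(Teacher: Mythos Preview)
Your argument is essentially the same as the paper's: both identify the sign of $\sigma_b(x)$ as the sole obstruction, invoke Lemma~\ref{lkw} to get $\sigma_{\theta v}(x)>0$ for $x\in W_1$ and $v\in\operatorname{sp}U$, and conclude that the transformation law (\ref{trafo1}) holds without passing to a covering. The paper simply writes out the warped-convolution integral and applies (\ref{trafo1}) under the integrand, whereas you are more explicit about why only parameters $b=\theta v$ with $v\in\operatorname{sp}U$ effectively contribute---a point the paper leaves implicit.
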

\begin{proof} 
The problem with the absence of a unitary representation for
the whole conformal group that gives the correct transformation law 
(\ref{trafo1}) is due to
the absolute value of the scale factor $\sigma_b(x)$. Nevertheless, we 
showed in lemma \ref{lkw}
that the scale factor for a field localised in the right wedge is 
positive. The positivity of the
scale factor in turn  means that a unitary representation for the whole 
conformal group in four
dimensions exists, \cite{SV}.
\begin{align*}
  \phi_{W_1}(f)\Psi=\phi_{\theta,K}(f)\Psi=(2\pi)^{-4}\int d^4x f(x) 
\iint d^4vd^4u e^{-iuv}\alpha_{\theta v}(\phi(f))U(u)\Psi
  \\ =(2\pi)^{-4}\int d^4x f(x)\iint d^4v d^4u e^{-iuv} {\sigma_{\theta 
v}(x)}^{-1}\phi(x_{\theta v})
  U(u)\Psi,\qquad \Psi\in\mathcal{D}.
\end{align*}
For a quantum field defined on the left wedge the proof is done 
analogously.
\end{proof} 
\begin{proposition} For  $n=3$, the fields $\phi=\{\phi_W: 
W\in\mathcal{W}_{0}\}$ 
are wedge-local fields on the Bosonic Fockspace $\mathscr{H}^+ $.
\end{proposition}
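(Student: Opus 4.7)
The plan is to mimic the proof of Proposition \ref{propt} ($n=4l+1$) almost verbatim, but to justify each step using Lemma \ref{l1} instead of the global existence of a unitary representation of the full conformal group. The obstruction that prevented us from treating $n=3$ in Proposition \ref{propt} was precisely that the transformation law (\ref{trafo1}), which involves $\sigma_b(x)^{(1-n)/2}$, requires positivity of $\sigma_b(x)$ in order to yield a single-valued unitary action of the whole conformal group, and the general scale factor is not guaranteed to be positive in four dimensions. Lemma \ref{l1} now removes that obstruction for test functions localised in $W_1$ (resp.\ $-W_1$), which is exactly the support class we need.

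First I would invoke Proposition \ref{p1}: it suffices to establish
\begin{equation*}
[\alpha_{\theta v}(\phi(f)),\alpha_{-\theta u}(\phi(g))]=0,
\qquad v,u\in\mathrm{sp}\,U,
\end{equation*}
for all $f\in C_0^\infty(W_1)$ and $g\in C_0^\infty(-W_1)$, and then (\ref{wlf}) follows. Next I would smear out the commutator against $f$ and $g$, as in the proof of Proposition \ref{propt}, producing the double integral over $d^4x\,d^4y$ with integrand $f(x)g(y)[\alpha_{\theta v}(\phi(x)),\alpha_{-\theta u}(\phi(y))]$.

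At this point the $n=4l+1$ argument inserted the closed-form $\alpha_b(\phi(x))=\sigma_b(x)^{(1-n)/2}\phi(x_b)$ directly. For $n=3$ I would instead first appeal to Lemma \ref{l1}: since $\mathrm{supp}\,f\subset W_1$ and $\mathrm{supp}\,g\subset -W_1$, the scale factors $\sigma_{\theta v}(x)$ and $\sigma_{-\theta u}(y)$ are strictly positive on the supports of $f$ and $g$ respectively, so the unitary action of the special conformal group exists and the transformation formula (\ref{trafo1}) applies. Consequently the commutator reduces to
\begin{equation*}
\sigma_{\theta v}(x)^{-1}\,\sigma_{-\theta u}(y)^{-1}\,
[\phi(x_{\theta v}),\phi(y_{-\theta u})].
\end{equation*}
Then I would apply Lemma \ref{lkw} to conclude $x_{\theta v}\in W_1$ and $y_{-\theta u}\in -W_1$, so that the arguments of the two free fields are spacelike separated, the free-field commutator vanishes, and Proposition \ref{p1} delivers wedge-locality.

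The only genuinely delicate point is the interchange between Lemma \ref{l1} (which is stated for the smeared field $\phi_{\theta,K}(f)$ with $f$ of compact support in $W_1$) and the pointwise transformation law used inside the commutator of distributions; I would handle this by viewing $\alpha_{\theta v}(\phi(f))$ as the smeared object it actually is, using the positivity of $\sigma_{\theta v}$ on $\mathrm{supp}\,f$ to carry out the change of variables $x\mapsto x_{\theta v}$ inside the $x$-integral (and analogously for $y$), thus bypassing any need for a globally single-valued conformal representation. Once this reduction is in place, the remainder of the argument is identical to that of Proposition \ref{propt}.
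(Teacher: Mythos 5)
Your proposal is correct and follows essentially the same route as the paper: invoke Proposition \ref{p1}, use Lemma \ref{l1} to justify the transformation law $\alpha_b(\phi(x))=\sigma_b(x)^{-1}\phi(x_b)$ for wedge-supported test functions, and then apply Lemma \ref{lkw} to conclude that $x_{\theta v}$ and $y_{-\theta u}$ remain spacelike separated so the free-field commutator vanishes. Your closing remark about the interchange between the smeared statement of Lemma \ref{l1} and the pointwise use of the transformation law inside the distributional commutator is a point the paper passes over silently, so making it explicit is a small improvement rather than a deviation.
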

\begin{proof} 
Due to the existence of a unitary representation shown in lemma \ref{l1} 
the deformed field can be defined for $n=3$. Furthermore, by applying 
\ref{p1} one shows that the expression $[\alpha_{\theta 
v}(\phi(f)),\alpha_{-\theta u}(\phi(g))]$ vanishes for all $v,u$ $\in$ 
spU and for $f \in C_{0}^{\infty}(W_1)$, $g \in C_{0}^{\infty}(-W_1)$.
\begin{align*}&
  [\alpha_{\theta v}(\phi(f)),\alpha_{-\theta u}(\phi(g))]=(2\pi)^{-8}
  \iint d^{4}x d^{4}y
f(x)g(y)[\alpha_{\theta v}(\phi(x)),\alpha_{-\theta u}(\phi(y))]\\
&= (2\pi)^{-8} \iint d^{4}x d^{4}y
f(x)g(y) {\sigma_{\theta v}(x)}^{ -1} {\sigma_{-\theta u}(y)}^{ -1} 
[\phi(x_{\theta v}), \phi(y_{-\theta u})]  = 0,
\end{align*}
Analogous to the proof of proposition (\ref{propt}) we use   lemma \ref{lkw} in the last line.\end{proof}
This is a very interesting result. The deformed case improves the representations such that one does not have to deal with representations of the covering of the conformal group.
\section{ NC Spacetime from special conformal operators}

\subsection{Moyal-Weyl}
Within the framework of warped convolutions \cite{BLS}, the authors 
defined a deformed associative product in the following way.

\begin{definition}\label{dp}The associative deformed product $\times_{\theta}$ of 
$A,B$ is defined as
\begin{equation}
A\times_{\theta}B=(2\pi)^{-d}\iint d^dv d^du e^{-ivu}\alpha_{\theta 
v}(A)\alpha_{u}(B).
\end{equation}
Furthermore, the deformed commutator $[A\stackrel{\times_{\theta}}{,}B]$ of A, B is defined   in the following way 

\begin{equation}\label{dc}
[A\stackrel{\times_{\theta}}{,} B] :=A\times_{\theta}B-B\times_{\theta}A.
\end{equation}

\end{definition}The deformed product can be used to calculate the commutator of the coordinates. In the case of deformation  with the momentum operator $P_{\mu}$ one obtains the following lemma.
\begin{lemma}\label{lwp}
Let the deformed product \ref{dp} be  defined by the generator  of translations $P_{\mu}$. Then the deformed commutator \ref{dc}  of the coordinates gives the Moyal-Weyl plane  
\begin{equation}
[x_{\mu}\stackrel{\times_{\theta}}{,} x_{\nu}] :=x_{\mu}\times_{\theta}x_{\nu}-x_{\nu}\times_{\theta}x_{\mu}=-2i\theta_{\mu\nu}.
\end{equation}
\end{lemma}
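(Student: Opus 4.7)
The plan is to apply Definition \ref{dp} directly with $A=x_{\mu}$ and $B=x_{\nu}$, where $\alpha_{b}$ is the action of translations generated by $P_{\mu}$. First I would use that under this action the coordinates transform affinely as $\alpha_{b}(x_{\rho})=x_{\rho}+b_{\rho}$, so that $\alpha_{\theta v}(x_{\mu})=x_{\mu}+(\theta v)_{\mu}$ and $\alpha_{u}(x_{\nu})=x_{\nu}+u_{\nu}$. Substituting into the deformed product yields an oscillatory double integral of the polynomial $(x_{\mu}+(\theta v)_{\mu})(x_{\nu}+u_{\nu})$ against the kernel $e^{-ivu}$.

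Next I would expand this polynomial into its four monomial pieces and evaluate each as a distribution. Performing the $u$-integration first produces $\delta^{d}(v)$ together with its first derivative via $u_{\nu}e^{-ivu}=i\partial_{v^{\nu}}e^{-ivu}$, and the subsequent $v$-integration then evaluates at $v=0$. The term independent of $v$ and $u$ reproduces $x_{\mu}x_{\nu}$; the two terms linear in only $v$ or only $u$ vanish, because a single monomial in $v$ is killed by evaluation against $\delta^{d}(v)$ at the origin (respectively, by one integration by parts giving a constant, whose derivative is zero). The only nontrivial contribution comes from the mixed term $(\theta v)_{\mu}u_{\nu}$; one integration by parts in $v$ gives precisely $-i\theta_{\mu\nu}$, yielding $x_{\mu}\times_{\theta}x_{\nu}=x_{\mu}x_{\nu}-i\theta_{\mu\nu}$.

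The final step is to swap $\mu$ and $\nu$ and use the skew-symmetry of $\theta$ to obtain $x_{\nu}\times_{\theta}x_{\mu}=x_{\nu}x_{\mu}+i\theta_{\mu\nu}$. Subtracting the two expressions and observing that the ordinary products of coordinates cancel produces $[x_{\mu}\stackrel{\times_{\theta}}{,}x_{\nu}]=-2i\theta_{\mu\nu}$, as claimed.

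The main technical obstacle is giving rigorous sense to these oscillatory integrals, since the integrand grows polynomially in $v$ and $u$ and lies outside $L^{1}$. I would handle this by reading $\times_{\theta}$ on the polynomial symbols $x_{\mu},x_{\nu}$ in the sense of the Rieffel/Moyal oscillatory integral used in \cite{BLS}, or equivalently by inserting a Gaussian regulator $e^{-\epsilon(v^{2}+u^{2})}$, computing the now absolutely convergent Gaussian integrals explicitly, and sending $\epsilon\to 0$ at the end. In this controlled limit the identity $(2\pi)^{-d}\int d^{d}u\,e^{-ivu}=\delta^{d}(v)$ together with its distributional derivatives applies in the standard way and reproduces the computation sketched above.
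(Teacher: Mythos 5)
Your proposal is correct and follows essentially the same route as the paper: apply the definition with $\alpha_{b}(x_{\rho})=x_{\rho}+b_{\rho}$, evaluate the oscillatory integral to get $x_{\mu}\times_{\theta}x_{\nu}=x_{\mu}x_{\nu}-i\theta_{\mu\nu}$, and antisymmetrize using the skew-symmetry of $\theta$. The only difference is that you spell out the distributional/regularized sense of the oscillatory integral, which the paper leaves implicit.
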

\begin{proof}We first calculate the deformed product of the coordinates using definition \ref{dp}.
\begin{align*} 
x_{\mu}\times_{\theta}x_{\nu}&=(2\pi)^{-d}\iint d^dv d^du 
e^{-ivu}\alpha_{\theta v}(x_{\mu})\alpha_{u}(x_{\nu})\\&
=(2\pi)^{-d}\iint d^dv d^du e^{-ivu} (x_{\mu}+{(\theta v)}_{\mu}) 
(x_{\nu}+{u}_{\nu}) =x_{\mu}x_{\nu}-i\theta_{\mu\nu}
\end{align*}
In the last lines, we applied the adjoint action of the momentum operator 
$P_{\mu}$ on the coordinates, which induces a translation. The next 
step consists in calculating the deformed commutator of the coordinates.
Due to the skew-symmetry of the deformation matrix $\theta$, one 
obtains for the deformed commutator the Moyal-Weyl plane.
\end{proof}
This result is not surprising. As already mentioned, in \cite{GL1} a quantum field was defined on the Moyal-Weyl plane, which also can obtained  by using the momentum operator for deformation via warped convolutions. 
Therefore, it is only natural that the Moyal-Weyl plane appears for the deformed commutator of the coordinates. In the next section we will calculate the commutator of the coordinates 
by using the deformed product induced by the special conformal operators.
\subsection{Nonconstant noncommutative spacetime}
The main idea in this work is to use the special conformal operator to 
deform the free quantum field. We further proved that the deformed field 
satisfies some weakened covariance and locality properties. Now a 
natural question arises. What is the noncommutative spacetime that we 
obtain from the deformation with the special conformal operator? This 
question can be answered by calculating the deformed commutator of the 
coordinates.
\begin{equation}\label{comm}
[x_{\mu}\stackrel{\times_{\theta}}{,} x_{\nu}] =(2\pi)^{-d}\iint d^dv 
d^du e^{-ivu}\left( \alpha_{\theta 
v}(x_{\mu})\alpha_{u}(x_{\nu})-\alpha_{\theta v}(x_{\nu})\alpha_{u}(x_{\mu})
\right).
\end{equation}
To calculate the term $\alpha_{\theta v}(x_{\mu})$ we insert the 
generator $K_{\mu}$ as a differential operator defined in (\cite{DMS}).
\begin{equation}
\alpha_{\theta v}(x_{\mu})=\mathrm{exp}\left({\left(\theta 
v\right)^{\sigma}\left(2x_{\sigma}x^{\lambda}\frac{\partial}{\partial 
x^{\lambda}}-
x^2\frac{\partial}{\partial x^{\sigma}}
\right)}\right)x_{\mu}=:\mathrm{exp}\left( (\theta v)^{\sigma}K_{\sigma}(x)
\right)x_{\mu}
\end{equation}
We could use the transformation of the coordinates under the special 
conformal generators, but in that case we would not be able to solve the 
integral. The ansatz we follow in this work is to solve the integral, 
order by order. This will be done by preforming a Taylor expansion of 
the exponentials.
\begin{lemma}
 Let the deformed product \ref{dp} be  defined by the generator  of special conformal transformations $K_{\mu}$. Then the deformed commutator (\ref{comm}), up to third order in $\theta$ is given as follows
\begin{align*}
[x_{\mu}\stackrel{\times_{\theta}}{,} x_{\nu}]&=
-2i\theta_{\mu\nu}x^4
-4i\left((\theta x)_{\mu}x_{\nu}-(\theta x)_{\nu}x_{\mu} 
\right)x^{2}.
\end{align*}
\end{lemma}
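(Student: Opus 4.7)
The plan is to evaluate the deformed commutator (\ref{comm}) by expanding both exponentials $\alpha_{\theta v}(x_\mu) = \exp((\theta v)^\sigma K_\sigma(x))x_\mu$ and $\alpha_u(x_\nu) = \exp(u^\tau K_\tau(x))x_\nu$ as formal power series in their arguments, exchanging the summation with the $(v,u)$-integration, and performing the resulting oscillatory integrals order by order in $\theta$. The organising principle is that the conformal algebra gives $[K_\mu,K_\nu]=0$, so every iterated product $K_{\sigma_1}\cdots K_{\sigma_n}x_\mu$ is totally symmetric in the $\sigma_i$.

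First I would compute the generic integral
\begin{equation*}
\iint (2\pi)^{-d} dv\,du\; e^{-iv\cdot u}\, v_{\rho_1}\cdots v_{\rho_n}\, u^{\tau_1}\cdots u^{\tau_m}.
\end{equation*}
The $u$-integration produces derivatives of $\delta^d(v)$, which after integration by parts against the polynomial in $v$ forces $m=n$ and leaves a sum over the $n!$ pairings $\pi\in S_n$ of the $\rho$-indices with the $\tau$-indices. Symmetry of $K_{\sigma_1}\cdots K_{\sigma_n}$ makes every pairing yield the same contribution, so one arrives at the compact expansion
\begin{equation*}
x_\mu\times_\theta x_\nu = \sum_{n=0}^{\infty}\frac{(-i)^n}{n!}\,\theta^{\sigma_1\tau_1}\!\cdots\theta^{\sigma_n\tau_n}\,(K_{\sigma_1}\!\cdots K_{\sigma_n}x_\mu)(K_{\tau_1}\!\cdots K_{\tau_n}x_\nu).
\end{equation*}
The $n=0$ contribution is $x_\mu x_\nu$ and drops from the commutator.

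For the first-order term I would use the basic identity $K_\sigma x_\mu = 2x_\sigma x_\mu - \eta_{\sigma\mu}x^2$ obtained directly from the differential-operator expression, multiply out $(K_\sigma x_\mu)(K_\tau x_\nu)$, contract with $-i\theta^{\sigma\tau}$, and antisymmetrise in $(\mu,\nu)$. The $4 x_\sigma x_\tau x_\mu x_\nu$ piece is killed by the antisymmetry of $\theta^{\sigma\tau}$; the purely quartic contraction collapses to $-2i\theta_{\mu\nu}x^4$; and the four remaining cross terms pair up to give the displayed $-4i[(\theta x)_\mu x_\nu-(\theta x)_\nu x_\mu]x^2$ contribution. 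Vanishing of the $n=2$ piece follows by a symmetry argument: relabelling $\sigma_i\leftrightarrow\tau_i$ costs $(-1)^2=+1$ from the two antisymmetric $\theta$'s, and commutativity of the two $c$-number factors $K^2 x_\mu$ and $K^2 x_\nu$ then shows the whole $n=2$ contribution is symmetric under $\mu\leftrightarrow\nu$, hence invisible to the commutator.

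The main obstacle is pure bookkeeping at third order: one must compute $K_{\sigma_1}K_{\sigma_2}K_{\sigma_3}x_\mu$ by iterating the Leibniz rule using the first-order formula together with $K_\sigma x^2 = 2x_\sigma x^2$, and verify that the full $\theta^3$ contraction produces no new structure beyond what is already displayed. Either this triple contraction is shown to be $(\mu,\nu)$-symmetric and so drops out of the commutator, completing the lemma, or else the remaining additive piece is explicitly computed and recognised to vanish. I expect this index gymnastics — and no other conceptual difficulty — to be the hardest step of the proof.
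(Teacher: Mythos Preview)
Your approach is essentially the paper's: Taylor-expand both factors, perform the oscillatory integral to obtain the series
\[
x_\mu\times_\theta x_\nu=\sum_{n\ge 0}\frac{(-i)^n}{n!}\,\theta^{\sigma_1\tau_1}\cdots\theta^{\sigma_n\tau_n}(K_{\sigma_1}\cdots K_{\sigma_n}x_\mu)(K_{\tau_1}\cdots K_{\tau_n}x_\nu),
\]
then evaluate $n=1$ explicitly and kill even $n$ by symmetry. Your index-relabelling argument for even $n$ is the series-level version of the paper's substitution $(v,u)\to(\theta^{-1}u,\theta v)$ followed by $u\to -u$; both exploit that the sign picked up is $(-1)^n$ and that the two $c$-number factors commute.

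The one genuine issue is your reading of ``up to third order in $\theta$''. In the paper this means \emph{modulo} $\mathcal{O}(\theta^3)$: the proof explicitly stops after computing the first-order contribution and noting that the second-order piece drops out, writing the result with a trailing $+\mathcal{O}(\theta^3)$. There is no claim that the $n=3$ term vanishes, and your own parity argument shows why you should not expect it to: swapping $\sigma_i\leftrightarrow\tau_i$ now produces $(-1)^3=-1$, so the $n=3$ term is \emph{antisymmetric} in $(\mu,\nu)$ and survives in the commutator rather than cancelling. Hence the ``main obstacle'' you identify --- the third-order index gymnastics --- is not part of the lemma at all, and the plan to show that contribution is $(\mu,\nu)$-symmetric or otherwise zero would fail. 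Drop the $n=3$ step and you are done.
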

\begin{proof}
The deformed commutator gives the following
\begin{align*}
&[x_{\mu}\stackrel{\times_{\theta}}{,} x_{\nu}]=(2\pi)^{-d}\iint d^dv 
d^du e^{-ivu}\left(\alpha_{\theta 
v}(x_{\mu})\alpha_{u}(x_{\nu})-\mu\leftrightarrow\nu\right) \\&
=(2\pi)^{-d}\iint d^dv d^du e^{-ivu}\left(\sum_{k=0}^{\infty}
\frac{1}{k!}\underbrace{(\theta v)^{\sigma}K_{\sigma}(x)\cdots (\theta 
v)^{\rho}K_{\rho}(x)}_{k}x_{\mu}\right)
\end{align*}
\begin{align*}&
\times\left(\sum_{l=0}^{\infty}
\frac{1}{l!}\underbrace{(u)^{\lambda}K_{\lambda}(x)\cdots 
(u)^{\tau}K_{\tau}(x)}_{l}x_{\nu}
-\mu\leftrightarrow\nu\right).
\end{align*}
There are two properties for the series that can be easily seen. First, 
the different orders between $\theta v$
and $u$ do not mix. The only terms which are not equal to zero are the 
terms of equal order. The vanishing of unequal orders between $\theta v$ 
and $u$ will be shown in the following calculation.

\begin{align*}
&\iint d^dv d^du e^{-ivu}\sum_{k=0}^{\infty}
\sum_{l=0}^{\infty}\frac{1}{k!l!}
\underbrace{(\theta v)^{\sigma}\cdots (\theta v)^{\rho}}_{k}
\underbrace{u^{\lambda}\cdots u^{\tau}}_{l}\left(
\underbrace{K_{\sigma}(x)\cdots K_{\rho}(x)}_{k}x_{\mu} \right)\\&
\times\left(
\underbrace{K_{\lambda}(x)\cdots K_{\tau}(x)}_{l}x_{\nu} 
\right)\end{align*}
\begin{align*}
&
=
\iint d^dv d^du \sum_{k=0}^{\infty}
\sum_{l=0}^{\infty}\frac{(-i)^k}{k!l!}
\underbrace{\theta^{\sigma}_{\kappa}\cdots \theta^{\rho}_{\gamma}
}_{k}
  \left( \underbrace{\frac{\partial}{\partial u_{\kappa}}\cdots 
\frac{\partial}{\partial u_{\gamma}}}_{k}
e^{-ivu}\right) \underbrace{u^{\lambda}\cdots u^{\tau}}_{l}
\\&\times
\left(
\underbrace{K_{\sigma}(x)\cdots K_{\rho}(x)}_{k}x_{\mu} \right)\left(
\underbrace{K_{\lambda}(x)\cdots K_{\tau}(x)}_{l}x_{\nu} 
\right)\end{align*} \begin{align*} &=
\iint d^dv d^du e^{-ivu}\sum_{k=0}^{\infty}
\sum_{l=0}^{\infty}\frac{i^k}{k!l!}
\underbrace{\theta^{\sigma}_{\kappa}\cdots \theta^{\rho}_{\gamma}
}_{k} \left( \left( \underbrace{\frac{\partial}{\partial 
u_{\kappa}}\cdots \frac{\partial}{\partial u_{\gamma}}}_{k}
\right) \underbrace{u^{\lambda}\cdots u^{\tau}}_{l}\right)
\\&\times
\left(
\underbrace{K_{\sigma}(x)\cdots K_{\rho}(x)}_{k}x_{\mu} \right)\left(
\underbrace{K_{\lambda}(x)\cdots K_{\tau}(x)}_{l}x_{\nu} 
\right)\end{align*} \begin{align*}&
=\sum_{k=0}^{\infty}\frac{(-i)^k}{k!}
\underbrace{\theta^{\sigma\lambda}\cdots \theta^{\rho\tau}}_{k}\left(
\underbrace{K_{\sigma}(x)\cdots K_{\rho}(x)}_{k}x_{\mu} \right)\left(
\underbrace{K_{\lambda}(x)\cdots K_{\tau}(x)}_{k}x_{\nu} \right)
\end{align*}
In the third line we performed a partial integration. The expression 
vanishes in the case $k>l$, because the differentials annihilate the 
polynomial in $u$. It also vanishes if $k<l$ because  nonvanishing 
polynomials in $u$ stay and the integral sets the polynomials zero. 
Furthermore, by using the symmetry of the $x$-dependent differential 
operators $K$, one solves the integral. It is important to note that the 
result of the deformed product between the coordinates, is exactly the 
same result one would encounter by using twist-deformation with the 
special conformal operators, \cite{AJ}.
\\ \\
The second observation is that polynomials in $u,v$ that are even vanish 
due to the antisymmetry of the commutator. This is shown in the following.
\begin{align*}&
\iint d^dv d^du e^{-ivu}\left(
\underbrace{(\theta v)^{\sigma}K_{\sigma}(x)\cdots (\theta 
v)^{\rho}K_{\rho}(x)}_{2m}x_{\mu}
\underbrace{(u)^{\lambda}K_{\lambda}(x)\cdots 
(u)^{\tau}K_{\tau}(x)}_{2m}x_{\nu} \right)
\\ -&\iint d^dv d^du e^{-ivu}
\left(
\underbrace{(\theta v)^{\sigma}K_{\sigma}(x)\cdots (\theta 
v)^{\rho}K_{\rho}(x)}_{2m}x_{\nu}
\underbrace{(u)^{\lambda}K_{\lambda}(x)\cdots 
(u)^{\tau}K_{\tau}(x)}_{2m}x_{\mu} \right)
\end{align*}
Where m is a natural number. In the second integral we preform the 
integration variable substitution
$(v,u)\rightarrow (\theta^{-1}u,\theta v) $ and obtain
\begin{align*}&
\iint d^dv d^du e^{-ivu}\left(
\underbrace{(\theta v)^{\sigma}K_{\sigma}(x)\cdots (\theta 
v)^{\rho}K_{\rho}(x)}_{2m}x_{\mu}
\underbrace{(u)^{\lambda}K_{\lambda}(x)\cdots 
(u)^{\tau}K_{\tau}(x)}_{2m}x_{\nu} \right)
\\ -&\iint d^dv d^du e^{ivu}
\left(
\underbrace{(\theta v)^{\sigma}K_{\sigma}(x)\cdots (\theta 
v)^{\rho}K_{\rho}(x)}_{2m}x_{\mu}
\underbrace{(u)^{\lambda}K_{\lambda}(x)\cdots 
(u)^{\tau}K_{\tau}(x)}_{2m}x_{\nu} \right).
\end{align*}
After preforming the integration variable substitution $u\rightarrow -u$ 
we obtain
\begin{align*}&
\iint d^dv d^du e^{-ivu}\left(
\underbrace{(\theta v)^{\sigma}K_{\sigma}(x)\cdots (\theta 
v)^{\rho}K_{\rho}(x)}_{2m}x_{\mu}
\underbrace{(u)^{\lambda}K_{\lambda}(x)\cdots 
(u)^{\tau}K_{\tau}(x)}_{2m}x_{\nu} \right)
\\ -&(-1)^{2m}\iint d^dv d^du e^{-ivu}
\left(
\underbrace{(\theta v)^{\sigma}K_{\sigma}(x)\cdots (\theta 
v)^{\rho}K_{\rho}(x)}_{2m}x_{\mu}
\underbrace{(u)^{\lambda}K_{\lambda}(x)\cdots 
(u)^{\tau}K_{\tau}(x)}_{2m}x_{\nu} \right)\\&=0.
\end{align*}
Therefore, the only terms that do not vanish are those of equal odd 
order in $v$ and $u$.
In the following we calculate the noncommutativity of the coordinates 
up to the second order and obtain  
\begin{align*}\label{ncst}
[x_{\mu}\stackrel{\times_{\theta}}{,} x_{\nu}]&
=(2\pi)^{-d}\iint d^dv d^du e^{-ivu}\left(
     (\theta v)^{\sigma}(
2x_{\sigma}x_{\mu}-
x^2\eta_{\sigma\mu})
     (u)^{\tau}
\left(2x_{\tau}x_{\nu}-
x^2\eta_{\tau\nu}\right) - \mu\leftrightarrow\nu
\right)\\&
=
-2i\theta_{\mu\nu}x^4
-4i\left((\theta x)_{\mu}x_{\nu}-(\theta x)_{\nu}x_{\mu} 
\right)x^{2}+\mathcal{O}(\theta^3).
\end{align*}\end{proof}
The deformed commutator of the coordinates shows that the deformation 
induced by the special conformal operators spans a nonconstant 
noncommutative spacetime. 
 This is very interesting because 
the spacetime that we obtain is a curved noncommutative spacetime and 
the curvature of the noncommutative spacetime is induced by the special 
conformal operators. In the case of using the momentum operator, i.e. 
the generator of translations in Minkowski, for the deformation one 
obtains a flat noncommutative spacetime. The special conformal 
operators  induce a conformal flat spacetime on Minkowski  and 
therefore, one obtains a conformally flat noncommutative spacetime when 
deforming with $K_{\mu}$. Some examples of a nonconstant noncommutative 
spacetime exist in literature, where the  highest order of 
noncommutativity known is the so called quantum space structure, \cite{KS, W1, W2}. The quantum 
space structure has an $x$-polynomial dependence up to second order.

\subsection{Generalisation of the deformation}
The deformation of an operator   by either using the momentum operator 
$P_{\mu}$ or the special conformal operator $K_{\mu}$ can be written in 
a general form. The generalisation can be accomplished by using a linear 
combination of generators of the pseudo-orthogonal group $SO(2,d)$. 
First, we redefine the operators $P^{\mu}$ and $K^{\nu}$ in the 
following way
\begin{align}\tilde{P}^{\mu}:=
\begin{pmatrix} {\lambda'}  P^0  \\ {\lambda'} P^1 \\  {\eta'}  P^2
  \\   {\eta'}  P^3
  \end{pmatrix},
\qquad \tilde{K}^{\mu}:=
\begin{pmatrix} {\lambda}  K^0  \\ {\lambda} K^1 \\  {\eta}  K^2
  \\   {\eta}  K^3 \end{pmatrix},
\end{align} where $\lambda',\lambda\in \mathbb{R}^{+}$ and 
$\eta',\eta\in \mathbb{R}$.
In the next step we redefine the Lorentz generators $J^{4,\mu}$, 
$J^{5,\mu}$, $J^{\pm,\mu}$ and the skew-symmetric matrix $\theta$ as follows
\begin{align}\tilde{J}^{4,\mu}:=
\frac{1}{2}\left(\tilde{P}^{\mu}-\tilde{K}^{\mu}\right),& \qquad 
\tilde{J}^{5,\mu}:=
\frac{1}{2}\left(\tilde{P}^{\mu}+\tilde{K}^{\mu}\right)
\end{align}
\begin{align}
\tilde{J}^{\pm,\mu}:= \tilde{J}^{5,\mu} \pm \tilde{J}^{4,\mu},
\end{align}
\begin{align}\label{th}
\tilde{\theta}=
\begin{pmatrix} 0 & 1 & 0&0 \\ 1 & 0 & 0&0\\0 & 0& 0&1\\0 & 0 & -1&0
  \end{pmatrix}.
\end{align}
\begin{definition}
Let $\tilde{\theta}$ be a real skew-symmetric matrix given in (\ref{th}) 
and let $A\in C^{\infty}$. Then the generalized warped convolutions, 
i.e. the deformation of $A$ denoted as $A^{\pm}_{\theta}$ is defined as 
follows
\begin{equation}
A^{\pm}_{\theta}\Psi:=(2\pi)^{-d}
   \iint  d^{4}y d^{4}ke^{-iy_{\mu}k^{\mu}} U^{\pm}(\theta 
y)AU^{\pm}(\theta y)U^{\pm}(k)\Psi, \qquad \Psi \in \mathcal{D},
\end{equation}
where the unitary operator $U^{\pm}(k)$ is defined as  
$U^{\pm}(k):=\mathrm{exp}\left(i k^{\mu}\tilde{J}^{\pm}_{\mu}\right)$.
\end{definition}
The generalisation of the deformation is interesting because it is 
obtained as a linear combination of generators of $SO(2,4)$. By choosing 
the plus sign, one obtains the Moyal-Weyl case and by choosing the minus 
sign one gets the special conformal model introduced in this work.
\section{Conclusion and outlook}
In this work we deformed a quantum field 
theory with the special conformal operator $K_{\mu}$, using the warped convolutions.  To proceed with the deformation, self-adjointness of the generator $K_{\mu}$ is proven in order to obtain a strongly continuous automorphism of the group $\mathbb{R}^n$. The proof of
self-adjointness was done rigorously in \cite{SV} and is sketched in Sec. 3. Therefore, we were able to 
define the deformation of the scalar field with the special conformal 
operators. We further proved that the deformed quantum field satisfies 
the Wightman axioms, except for the covariance and locality.  \\\\
The homomorphism $Q(\Lambda W)$, defined in \cite{GL1} was used in 
this work to define the map from the deformed field $\phi_{\theta}$ to a 
field defined on the wedge $\phi_{W}$. Furthermore, it was proven that 
the field $\phi_{W}$ transforms as a  wedge-covariant field under the 
adjoint action of the Lorentz group. Wedge-locality for the field 
$\phi_{W}$ was shown in $d=4l+2,l \in\mathbb{N}_0$ dimensions. In 4 
dimensions one usually has a problem with the existence of a unitary 
representation for the whole conformal group. The absence of a unitary 
representation is due to the absolute value of the scale factor induced 
by the special conformal transformations. We circumvented the problem by 
proving positivity of the scale factor. Positivity was proven by using 
the properties of the wedge and the spectrum condition of the special 
conformal operator.
\\\\
The deformed product defined in \cite{BLS} is used to understand the 
noncommutative spacetime being induced by using the special conformal 
operators. The deformed product is used to calculate the commutator of 
the coordinates. We first proved that the formula obtained by solving 
the integral is known in literature as twist deformation. Furthermore, 
we discovered that the noncommutative spacetime obtained in this manner 
is a nonconstant noncommutative spacetime which seems to be a new 
result.  
\\\\
To calculate the S-matrix in the current framework we have to use 
the concept of the temperate polarization-free generators defined in 
\cite{BBS}. The concept can be used for fields deformed with the 
momentum operator, but for the special conformal operator we still have 
to work out some technical subtleties. This will be done in a further work.
\section*{Aknowledgments}
I would like to thank my advisor K. Sibold for providing constant support. Furthermore, I would
like to thank  G. Lechner for very helpful remarks and in particular, S. Alazzawi for technical details concerning 
the Reeh-Schlieder property. I would like to thank S. Pottel for many fruitful discussions and  
T. Ludwig and J. Zschoche for proof reading.  The hospitality of the Max Planck Institute is gratefully acknowledged.

\end{document}